\documentclass[submission,copyright,creativecommons,hidelinks]{eptcs}
\usepackage{breakurl}             
\usepackage{underscore}           
\usepackage{cite}
\usepackage{amssymb}
\usepackage{amsmath}
\usepackage{amsthm}
\usepackage{bussproofs}
\usepackage{multicol}
\usepackage{comment}
\usepackage{enumitem}
\usepackage{url}
\usepackage{xcolor}
\usepackage{tabularx}
\usepackage{framed}
\usepackage{float}
\usepackage[figurename=Figure]{caption}
\usepackage{extarrows}
\usepackage{longtable}

\newtheorem{thm}{Theorem}

\newtheorem{lem}[thm]{Lemma}
\newtheorem{defn}[thm]{Definition}
\newtheorem{cor}[thm]{Corollary}
\newtheorem{exam}{Example}

\setlength\LTleft{0pt}
\setlength\LTright\fill

\DeclareMathAlphabet{\mathrmbf}{\encodingdefault}{\rmdefault}{bx}{n}
\interfootnotelinepenalty=10000

\title{Equational Theorem Proving for Clauses over Strings}

\author{Dohan Kim
\institute{A. I. Research Lab, Kyungwon Plaza 201, Sujeong-gu, Seongnam-si, Gyeonggi-do, South Korea}
\email{dkim@airesearch.kr}
}

\begin{document}
\maketitle

\begin{abstract}
Although reasoning about equations over strings has been extensively studied for several decades, little research has been done for equational reasoning on general clauses over strings. This paper introduces a new superposition calculus with strings and present an equational theorem proving framework for clauses over strings. It provides a saturation procedure for clauses over strings and show that the proposed superposition calculus with contraction rules is refutationally complete. This paper also presents a new decision procedure for word problems over strings w.r.t.\;a set of conditional equations $R$ over strings if $R$ can be finitely saturated under the proposed inference system.
\end{abstract}

\section{Introduction}
Strings are fundamental objects in mathematics and many fields of science including computer science and biology. Reasoning about equations over strings has been widely studied in the context of string rewriting systems, formal language theory, word problems in semigroups, monoids and groups~\cite{Book1993,Epstein1992}, etc. Roughly speaking, reasoning about equations over strings replaces equals by equals w.r.t.\;a given reduction ordering $\succ$. For example, if we have two equations over strings $u_1u_2u_3\approx s$ and $u_2\approx t$ with $u_1u_2u_3\succ s$ and $u_2\succ t$, where $u_2$ is not the empty string, then we may infer the equation $u_1tu_3 \approx s$ by replacing $u_2$ in $u_1u_2u_3\approx s$ with $t$. Meanwhile, if we have two equations over strings $u_1u_2\approx s$ and $u_2u_3\approx t$ with $u_1u_2\succ s$ and $u_2u_3\succ t$, where $u_2$ is not the empty string, then we  should also be able to infer the equation $u_1t\approx su_3$. This can be done by  concatenating $u_3$ to both sides of $u_1u_2\approx s$ (i.e., $u_1u_2u_3\approx su_3$) and then replacing $u_2u_3$ in $u_1u_2u_3\approx su_3$ with $t$. Here, the \emph{monotonicity property} of equations over strings is assumed, i.e.,\;$s\approx t$ implies $usv\approx utv$ for strings $s$, $t$, $u$, and $v$.\footnote{Note that it suffices to assume the right monotonicity property of equations over strings, i.e.,\;$s\approx t$ implies $su\approx tu$ for strings $s$, $t$, and $u$, when finding overlaps between equations over strings under the monotonicity assumption.}\\
\indent This reasoning about equations over strings is the basic ingredient for \emph{completion}~\cite{Book1993,Holt2005} of string rewriting systems. A completion procedure~\cite{Book1993,Holt2005} attempts to construct a finite convergent string rewriting system, where a finite convergent string rewriting system provides a decision procedure for its corresponding equational theory.\\
\indent Unlike reasoning about equations over strings, equational reasoning on general clauses over strings has not been well studied, where clauses are often the essential building blocks for logical statements.\\
\indent This paper proposes a superposition calculus and an equational theorem proving procedure with clauses over strings. The results presented here generalize the results about completion of equations over strings~\cite{Holt2005,Book1993}. Throughout this paper, the monotonicity property of equations over strings is assumed and considered in the proposed inference rules. This assumption is natural and common to equations over strings occurring in algebraic structures (e.g., semigroups and monoids), formal language theory, etc. The \emph{cancellation property} of equations over strings is not assumed, i.e., $su \approx tu$ implies $s\approx t$ for strings $s$, $t$, and a nonempty string $u$ (cf. \emph{non}-\emph{cancellative}~\cite{Book1993} algebraic structures).\\
\indent Now, the proposed superposition inference rule is given roughly as follows:\\\\
\noindent \LeftLabel{\bf Superposition:\;\;}
\AxiomC{$C\vee u_1u_2 \approx s$}\AxiomC{$D\vee u_2u_3\approx t$}
\BinaryInfC{$C\vee D\vee u_1t\approx su_3$}
\DisplayProof\\\\
if $u_2$ is not the empty string, and $u_1u_2 \succ s$ and $u_2u_3 \succ t$.\\\\
\indent Intuitively speaking, using the monotonicity property, $C\vee u_1u_2u_3\approx su_3$ can be obtained from the left premise $C \vee u_1u_2 \approx s$. Then the above inference by Superposition can be viewed as an application of a conditional rewrite rule $D\vee u_2u_3\approx t$ to $C\vee u_1u_2u_3\approx su_3$, where $u_2u_3$ in $C \vee u_1u_2u_3\approx su_3$ is now replaced by $t$, and $D$ is appended to the conclusion. (Here, $D$ can be viewed as consisting of the positive and negative conditions.) Note that both $u_1$ and $u_3$ can be the empty string in the Superposition inference rule. These steps are combined into a single Superposition inference step. For example, suppose that we have three clauses 1: $ab\approx d$, 2: $bc\approx e$, and 3: $ae\not\approx dc$. We use the Superposition inference rule with 1 and 2, and obtain 4: $ae\approx dc$ from which we derive a contradiction with 3. The details of the inference rules in the proposed inference system are discussed in Section~\ref{sec:superposition}.\\
\indent The proposed superposition calculus is based on the simple string matching methods and the efficient length-lexicographic ordering instead of using equational unification and the more complex orderings, such as the lexicographic path ordering (LPO)~\cite{Dershowitz2001} and Knuth-Bendix ordering (KBO)~\cite{Baader1998}.\\
\indent This paper shows that a clause over strings can be translated into a clause over first-order terms, which allows one to use the existing notion of redundancy in the literature~\cite{Nieuwenhuis2001, Bachmair1994} for clauses over strings. Based on the notion of redundancy, one may delete redundant clauses using the contraction rules (i.e.,\;Simplification, Subsumption, and Tautology) during an equational theorem proving derivation in order to reduce the search space for a refutation.\\
\indent The \emph{model construction techniques}~\cite{Bachmair1994,Nieuwenhuis2001} is adapted for the refutational completeness of the proposed superposition calculus. This paper also uses a Herbrand interpretation by translating clauses over strings into clauses over first-order terms, where each nonground first-order clause represents all its ground instances. Note that this translation is not needed for the proposed inference system itself.\\
\indent Finally, the proposed equational theorem proving framework with clauses over strings allows one to provide a new decision procedure for word problems over strings w.r.t.\;a conditional equational theory $R$ if $R$ can be finitely saturated under the proposed inference system.

\section{Preliminaries}\label{sec:preliminaries}
It is assumed that the reader has some familiarity with equational theorem proving~\cite{Nieuwenhuis2001,Bachmair1994} and string rewriting systems~\cite{Book1993, Holt2005,Kapur1985}. The notion of conditional equations and Horn clauses are discussed in~\cite{Dershowitz1991}.\\
\indent An \emph{alphabet} $\Sigma$ is a finite set of symbols (or letters). The set of all strings of symbols over $\Sigma$ is denoted $\Sigma^*$ with the empty string $\lambda$.\\
\indent If $s\in \Sigma^*$, then the \emph{length} of $s$, denoted $|s|$, is defined as follows: $|\lambda|:=0$, $|a|:=1$ for each $a\in \Sigma$, and $|sa|:=|s|+1$ for $s\in \Sigma^*$ and $a\in\Sigma$.\\
\indent A \emph{multiset} is an unordered collection with possible duplicate elements. We denote by $M(x)$ the number of occurrences of an object $x$ in a multiset $M$.\\
\indent An \emph{equation} is an expression $s\approx t$, where $s$ and $t$ are strings, i.e., $s,t \in \Sigma^*$. A \emph{literal} is either a positive equation $L$, called a \emph{positive literal}, or a negative equation $\neg L$, called a \emph{negative literal}. We also write a negative literal $\neg (s\approx t)$ as $s\not\approx t$. We identify a positive literal $s\approx t$ with the multiset $\{\{s\},\{t\}\}$ and a negative literal $s\not\approx t$ with the multiset $\{\{s,t\}\}$. A \emph{clause} (over $\Sigma^*$) is a finite multiset of literals, written as a disjunction of literals $\neg A_1\vee \cdots \vee \neg A_m \vee B_1\vee \cdots \vee B_n$ or as an implication $\Gamma \rightarrow \Delta$, where $\Gamma=A_1 \wedge \cdots \wedge A_m$ and $\Delta=B_1\vee \cdots \vee B_n$. We say that $\Gamma$ is the \emph{antecedent} and $\Delta$ is the \emph{succedent} of clause $\Gamma \rightarrow \Delta$. A \emph{Horn clause} is a clause with at most one positive literal. The \emph{empty clause}, denoted $\square$, is the clause containing no literals.\\
\indent A \emph{conditional equation} is a clause of the form $(s_1 \approx t_1 \wedge \cdots \wedge s_n \approx t_n) \rightarrow l \approx r$. If $n=0$, a conditional equation is simply an equation. A conditional equation is naturally represented by a Horn clause. A \emph{conditional equational theory} is a set of conditional equations.\\
\indent Any ordering $\succ_S$ on a set $S$ can be extended to an ordering $\succ_S^{mul}$ on finite multisets over $S$ as follows: $M\succ_S^{mul} N$ if (i) $M\neq N$ and (ii) whenever $N(x) > M(x)$ then $M(y) > N(y)$, for some $y$ such that $y\succ_S x$.\\
\indent Given a multiset $M$ and an ordering $\succ$ on $M$, we say that $x$ is \emph{maximal} (resp.\;\emph {strictly maximal}) in $M$ if there is no $y\in M$ (resp.\;$y \in M\setminus\{x\})$ with $y\succ x$ (resp.\;$y \succ x$ or $x=y$).\\
\indent An ordering $>$ on $\Sigma^*$ is \emph{terminating} if there is no infinite chain of strings $s > s_1 > s_2 > \cdots$ for any $s \in \Sigma^*$. An ordering $>$ on $\Sigma^*$ is \emph{admissible} if $u > v$ implies $xuy > xvy$ for all $u,v,x,y \in \Sigma^*$.  An ordering $>$ on $\Sigma^*$  is  a \emph{reduction ordering} if it is terminating and admissible.\\
\indent The \emph{lexicographic ordering} $\succ_{lex}$ induced by a total precedence ordering $\succ_{prec}$ on $\Sigma$ ranks strings of the same length in $\Sigma^*$ by comparing the letters in the first index position where two strings differ using $\succ_{prec}$. For example, if $a=a_1a_2\cdots a_k$ and $b=b_1b_2\cdots b_k$, and the first index position where $a$ and $b$ are differ is $i$, then $a \succ_{lex} b$ if and only if $a_i \succ_{prec} b_i$.\\
\indent The \emph{length}-\emph{lexicographic ordering} $\succ$ on $\Sigma^*$ is defined as follows: $s \succ t$ if and only if $|s|>|t|$, or they have the same length and $s\succ_{lex} t$ for $s,t\in \Sigma^*$. If $\Sigma$ and $\succ_{prec}$ are fixed, then it is easy to see that we can determine whether $s\succ t$ for two (finite) input strings $s\in \Sigma^*$ and $t\in \Sigma^*$ in $O(n)$ time, where $n=|s|+|t|$. The length-lexicographic ordering $\succ$ on $\Sigma^*$ is a reduction ordering. We also write $\succ$ for a multiset extension of $\succ$ if it is clear from context.\\
\indent We say that $\approx$ has the \emph{monotonicity property} over $\Sigma^*$ if $s\approx t$ implies $usv \approx utv$ for all $s,t,u,v\in\Sigma^*$. Throughout this paper, it is assumed that $\approx$ has the monotonicity property over $\Sigma^*$.

\section{Superposition with Strings}\label{sec:superposition}
\subsection{Inference Rules}
The following inference rules for clauses over strings are parameterized by a selection function $\mathcal{S}$ and the length-lexicographic ordering $\succ$, where $\mathcal{S}$ arbitrarily selects exactly one negative literal for each clause containing at least one negative literal (see Section 3.6 in~\cite{Nieuwenhuis2001} or Section 6 in~\cite{Bachmair1998}). In this strategy, an inference involving a clause with a selected literal is performed before an inference from clauses without a selected literal for a theorem proving process. The intuition behind the (eager) selection of negative literals is that, roughly speaking, one may first prove the whole antecedent of each clause from other clauses. Then clauses with no selected literals are involved in the main deduction process. This strategy is particularly useful when we consider Horn completion in Section~\ref{sec:HornCompletion} and a decision procedure for the word problems associated with it. In the following, the symbol $\bowtie$ is used to denote either $\approx$ or $\not\approx$.\\

\noindent \LeftLabel{\bf Superposition:\;\;}
\AxiomC{$C\vee u_1u_2 \approx s$}\AxiomC{$D\vee u_2u_3\approx t$}
\BinaryInfC{$C\vee D\vee u_1t\approx su_3$}
\DisplayProof\\\\
if (i) $u_2$ is not $\lambda$, (ii) $C$ contains no selected literal, (iii) $D$ contains no selected literal, (iv) $u_1u_2 \succ s$, and (v) $u_2u_3 \succ t$.\footnote{We do not require that $u_1u_2\approx s$ (resp.\;$u_2u_3\approx t$) is strictly maximal in the left premise (resp. the right premise) because of the assumption on the monotonicity property of equations over strings (see also Lemma~\ref{lem:representation} in Section~\ref{subsec:lifting}).}

\noindent \LeftLabel{\bf Rewrite:\;\;}
\AxiomC{$C\vee u_1u_2u_3 \bowtie s$}\AxiomC {$D\vee u_2\approx t$}
\BinaryInfC{$C\vee D\vee u_1tu_3\bowtie s$}
\DisplayProof\\\\
if (i) $u_1u_2u_3\bowtie s$ is selected for the left premise whenever $\bowtie$ is $\not\approx$, (ii) $C$ contains no selected literal whenever $\bowtie$ is $\approx$, (iii) $D$ contains no selected literal, and (iv) $u_2 \succ t$.\footnote{Note that $u_2\succ t$ implies that $u_2$ cannot be the empty string $\lambda$.}\\

\noindent \LeftLabel{\bf Equality Resolution:\;\;}
\AxiomC{$C\vee s\not\approx s$}
\UnaryInfC{$C$}
\DisplayProof\\\\
if $s\not\approx s$ is selected for the premise.\\

The following Paramodulation and Factoring inference rules are used for non-Horn clauses containing positive literals only (cf.\;\emph{Equality Factoring}~\cite{Bachmair1994,Nieuwenhuis2001} and \emph{Merging Paramodulation} rule~\cite{Bachmair1994}).\\

\noindent \LeftLabel{\bf Paramodulation:\;\;}
\AxiomC{$C\vee s\approx u_1u_2$}\AxiomC {$D\vee u_2u_3\approx t$}
\BinaryInfC{$C\vee D \vee su_3\approx u_1t$}
\DisplayProof\\\\
if (i) $u_2$ is not $\lambda$, (ii) $C$ contains no selected literal, (iii) $C$ contains a positive literal, (iv) $D$ contains no selected literal, (v) $s\succ u_1u_2$, and (vi) $u_2u_3\succ t$.\\


\noindent \LeftLabel{\bf Factoring:\;\;}
\AxiomC{$C\vee s\approx t \vee su\approx tu$}
\UnaryInfC{$C\vee su\approx tu$}
\DisplayProof\\\\
if $C$ contains no selected literal.\\

In the proposed inference system, finding whether a string $s$ occurs within a string $t$ can be done in linear time in the size of $s$ and $t$ by using the existing string matching algorithms such as the Knuth-Morris-Pratt (KMP) algorithm~\cite{Cormen2001}. For example, the KMP algorithm can be used for finding $u_2$ in $u_1u_2u_3$ in the Rewrite rule and finding $u_2$ in $u_1u_2$ in the Superposition and Paramodulation rule.

In the remainder of this paper, we denote by $\mathfrak{S}$ the inference system consisting of the Superposition, Rewrite, Equality Resolution, Paramodulation and the Factoring rule, and denote by $S$ a set of clauses over strings. Also, by the \emph{contraction rules} we mean the following inference rules--Simplification, Subsumption and Tautology.\\

\noindent \LeftLabel{\bf Simplification:\;\;}
\AxiomC{$S\cup\{C\vee l_1ll_2\bowtie v,\; l\approx r\}$}
\UnaryInfC{$S\cup\{C\vee l_1rl_2\bowtie v,\; l\approx r\}$}
\DisplayProof\\\\
if (i) $l_1ll_2\bowtie v$ is selected for $C\vee l_1ll_2\bowtie v$ whenever $\bowtie$ is $\not\approx$, (ii) $l_1$ is not $\lambda$, and (iii) $l \succ r$.\\

In the following inference rule, we say that a clause $C$ \emph{subsumes} a clause $C^\prime$ if $C$ is contained in $C'$, where $C$ and $C'$ are viewed as the finite multisets.\\

\noindent \LeftLabel{\bf Subsumption:\;\;}
\AxiomC{$S\cup\{C, C^\prime\}$}
\UnaryInfC{$S\cup\{C\}$}
\DisplayProof\\\\
if $C\subseteq C'$.

\noindent \LeftLabel{\bf Tautology:\;\;}
\AxiomC{$S\cup\{C \vee s\approx s\}$}
\UnaryInfC{$S$}
\DisplayProof

\begin{exam}\normalfont\label{ex:ex1}
Let $a \succ b \succ c\succ d \succ e$ and consider the following inconsistent set of clauses 1: $ad\approx b\vee ad\approx c$, 2: $b \approx c$, 3: $ad\approx e$, and 4: $c\not\approx e$. Now, we show how the empty clause is derived:\\
5: $ad \approx c \vee ad \approx c$ (Paramodulation of 1 with 2)\\
6: $ad \approx c$ (Factoring of 5)\\
7: $c \approx e$ (Rewrite of 6 with 3)\\
8: $e \not\approx e$ ($c\not\approx e$ is selected for 4. Rewrite of 4 with 7)\\
9: $\square$ ($e \not\approx e$ is selected for 8. Equality Resolution on 8)\\\\
Note that there is no inference with the selected literal in 4 from the initial set of clauses 1, 2, 3, and 4. We produced clauses 5, 6, and 7 without using a selected literal. Once we have clause 7, there is an inference with the selected literal in 4.
\end{exam}

\begin{exam}\normalfont
\label{ex:ex2}
Let $a \succ b \succ c \succ d$ and consider the following inconsistent set of clauses 1: $aa \approx a \vee bd\not\approx a$, 2: $cd \approx b$, 3: $ad \approx c$, 4: $bd\approx a$, and 5: $dab\not\approx db$. Now, we show how the empty clause is derived:\\
6: $aa \approx a \vee a\not\approx a$ ($bd\not\approx a$ is selected for 1. Rewrite of 1 with 4)\\
7: $aa \approx a$ ($a\not\approx a$ is selected for 6. Equality resolution on 6)\\
8: $ac\approx ad$ (Superposition of 7 with 3)\\
9: $add \approx ab$ (Superposition of 8 with 2)\\
10: $ab \approx cd$ (Rewrite of 9 with 3)\\
11: $dcd\not\approx db$ ($dab\not\approx db$ is selected for 5. Rewrite of 5 with 10)\\
12: $db \not\approx db$ ($dcd\not\approx db$ is selected for 11. Rewrite of 11 with 2)\\
13: $\square$ ($db \not\approx db$ is selected for 12. Equality Resolution on 12)
\end{exam}

\subsection{Lifting Properties}\label{subsec:lifting}
Recall that $\Sigma^*$ is the set of all strings over $\Sigma$ with the empty string $\lambda$. We let $T(\Sigma\cup \{\bot\})$ be the set of all first-order ground terms over $\Sigma\cup \{\bot\}$, where each letter from $\Sigma$ is interpreted as a unary function symbol and $\bot$ is the only constant symbol. (The constant symbol $\bot$ does not have a special meaning (e.g.,\;``false'') in this paper.) We remove parentheses for notational convenience for each term in $T(\Sigma\cup \{\bot\})$. Since $\bot$ is the only constant symbol, we see that $\bot$ occurs only once at the end of each term in $T(\Sigma\cup \{\bot\})$. We may view each term in $T(\Sigma\cup \{\bot\})$ as a string ending with $\bot$. Now, the definitions used in Section~\ref{sec:preliminaries} can be carried over to the case when $\Sigma^*$ is replaced by $T(\Sigma\cup \{\bot\})$. In the remainder of this paper, we use the string notation for terms in $T(\Sigma\cup \{\bot\})$ unless otherwise stated.\\
\indent  Let $s\approx t$ be an equation over $\Sigma^*$. Then we can associate $s\approx t$ with the equation $s(x)\approx t(x)$, where $s(x)\approx t(x)$ represents the set of all its ground instances over $T(\Sigma\cup \{\bot\})$. (Here, $\lambda(x)$ and $\lambda\bot$ correspond to $x$ and $\bot$, respectively.) First, $s\approx t$ over $\Sigma^*$ corresponds to $s\bot \approx t\bot$ over $T(\Sigma\cup \{\bot\})$. Now, using the monotonicity property, if we concatenate string $u$ to both sides of $s\approx t$ over $\Sigma^*$, then we have $su \approx tu$, which corresponds to $su\bot\approx tu\bot$.\\
\indent There is a similar approach in string rewriting systems. If $S$ is a string rewriting system over $\Sigma^*$, then it is known that we can associate term rewriting system $R_S$ with $S$ in such a way that $R_S:=\{l(x)\rightarrow r(x)\,|\,l\rightarrow r \in S\}$~\cite{Book1993}, where $x$ is a variable and each letter from $\Sigma$ is interpreted as a unary function symbol. We may rename variables (by standardizing variables apart) whenever necessary. This approach is particularly useful when we consider critical pairs between the rules in a string rewriting system. For example, if there are two rules $aa\rightarrow c$ and $ab\rightarrow d$ in $S$, then we have $cb \leftarrow aab\rightarrow ad$, where ${<}cb, ad{>}$ (or  ${<}ad, cb{>}$) is a \emph{critical pair} formed from these two rules. This critical pair can also be found if we associate $aa\rightarrow c\in S$ with $a(a(x))\rightarrow c(x) \in R_S$ and $ab\rightarrow d\in S$ with $a(b(x))\rightarrow d(x) \in R_S$. First, we rename the rule $a(b(x))\rightarrow d(x) \in R_S$ into $a(b(y))\rightarrow d(y)$. Then by mapping $x$ to $b(z)$ and $y$ to $z$, we have $c(b(z))\leftarrow a(a(b(z)))\rightarrow a(d(z))$, where ${<}c(b(z)), a(d(z)){>}$ is a critical pair formed from these two rules. This critical pair can be associated with the critical pair ${<}cb, ad{>}$ formed from $aa\rightarrow c$ in $S$ and $ab\rightarrow d$ in $S$.\\ 
\indent However, if $s\not\approx t$ is a negative literal over strings, then we cannot simply associate $s\not\approx t$ with the negative literal $s(x)\not\approx t(x)$ over first-order terms. Suppose to the contrary that we associate $s\not\approx t$ with $s(x)\not\approx t(x)$. Then $s\not\approx t$ implies $su\not\approx tu$ for a nonempty string $u$ because we can substitute $u(y)$ for $x$ in $s(x)\not\approx t(x)$, and $su\not\approx tu$ can also be associated with $s(u(y))\not\approx t(u(y))$. Using the contrapositive argument, this means that $su\approx tu$ implies $s\approx t$ for the nonempty string $u$. Recall that we do not assume the cancellation property of equations over strings in this paper.\footnote{One may assume the cancellation property and associate $s\not\approx t$ over strings with $s(x)\not\approx t(x)$ over first-order terms, which is beyond the scope of this paper.} Instead, we simply associate $s\not\approx t$ with $s\bot\not\approx t\bot$. The following lemma is based on the above observations. We denote by $T(\Sigma\cup \{\bot\}, X)$ the set of first-order terms built on $\Sigma\cup\{\bot\}$ and a denumerable set of variables $X$, where each symbol from $\Sigma$ is interpreted as a unary function symbol and $\bot$ is the only constant symbol.

\begin{lem} \label{lem:representation} 
Let $C:=s_1\approx t_1\vee\cdots\vee s_m \approx t_m \vee u_1\not\approx v_1 \vee\cdots \vee u_n \not\approx v_n$ be a clause over $\Sigma^*$ and $P$ be the set of all clauses that follow from $C$ using the monotonicity property. Let $Q$ be the set of all ground instances of the clause $s_1(x_1)\approx t_1(x_1) \vee \cdots \vee s_m(x_m)\approx t_m(x_m) \vee u_1\bot\not\approx v_1\bot \vee \cdots \vee u_n\bot\not\approx v_n\bot$ over $T(\Sigma\cup \{\bot\}, X)$, where $x_1,\ldots, x_m$ are distinct variables in $X$ and each letter from $\Sigma$ is interpreted as a unary function symbol. Then there is a one-to-one correspondence between $P$ and $Q$. 
\end{lem}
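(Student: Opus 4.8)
The plan is to construct an explicit bijection $\varphi$ between $P$ and $Q$, so that the claimed one-to-one correspondence follows. I would begin by pinning down the shape of the two sets. On the string side, a clause follows from $C$ by the monotonicity property precisely when it is obtained from $C$ by appending, independently for each positive literal, some string $w_i\in\Sigma^*$ to both of its sides; monotonicity reasoning does not affect the negative literals $u_j\not\approx v_j$. Hence
\[ P = \Big\{\, \textstyle\bigvee_{i=1}^{m}(s_iw_i\approx t_iw_i)\ \vee\ \bigvee_{j=1}^{n}(u_j\not\approx v_j) \,:\, w_1,\dots,w_m\in\Sigma^*\,\Big\}. \]
On the term side, since $\bot$ is the only constant it occurs exactly once and at the very end, so every ground term of $T(\Sigma\cup\{\bot\})$ has the form $w\bot$ for a unique $w\in\Sigma^*$, and $s_i(w\bot)=s_iw\bot$. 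Therefore the ground instances of the clause in the statement are exactly
\[ Q = \Big\{\, \textstyle\bigvee_{i=1}^{m}(s_iw_i\bot\approx t_iw_i\bot)\ \vee\ \bigvee_{j=1}^{n}(u_j\bot\not\approx v_j\bot) \,:\, w_1,\dots,w_m\in\Sigma^*\,\Big\}. \]

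Next I would let $\varphi$ be the operation that appends $\bot$ to every string occurring in a clause, replacing a literal $p\bowtie q$ by $p\bot\bowtie q\bot$ and acting literal by literal on the clause viewed as a multiset. Since $w\mapsto w\bot$ is a bijection of $\Sigma^*$ onto the set of ground terms of $T(\Sigma\cup\{\bot\})$, $\varphi$ is a bijection from the set of all clauses over $\Sigma^*$ onto the set of all ground clauses over $T(\Sigma\cup\{\bot\})$; in particular its restriction to $P$ is automatically injective, which also takes care of the case in which $C$ has repeated literals — permuting the appended strings then produces the same multiset on both sides simultaneously. It then only remains to check $\varphi(P)=Q$, and comparing the two displays this is immediate: $\varphi$ sends the member of $P$ determined by $(w_1,\dots,w_m)$ to the ground instance of that clause under $[x_i\mapsto w_i\bot]_{i=1}^{m}$, and as $(w_1,\dots,w_m)$ ranges over $(\Sigma^*)^m$ this sweeps out precisely $Q$.

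The step I expect to carry the real weight — the main obstacle — is justifying the description of $P$ in the first paragraph, which is also the point at which the paper's remark that right monotonicity suffices becomes essential: $P$ must be read as the closure of $\{C\}$ under right concatenation only, matching the single trailing variable $x_i$ carried by each positive literal. The inclusion ``$\subseteq$'' then amounts to observing that monotonicity is an inference schema on positive equations only, so the negative part of $C$ is frozen and each $s_i\approx t_i$ can only grow into some $s_iw_i\approx t_iw_i$; the inclusion ``$\supseteq$'' is the soundness remark that in any model validating monotonicity in which $C$ holds, whichever disjunct of $C$ is true forces the corresponding disjunct of the rewritten clause to be true (a negative literal directly, a positive literal $s_i\approx t_i$ via $s_iw_i\approx t_iw_i$). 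While handling the negative literals I would also recall why they must be anchored by $\bot$ rather than carried by a variable: associating $u_j\not\approx v_j$ with $u_j(x)\not\approx v_j(x)$ would silently import the cancellation property, which is deliberately not assumed — exactly the observation made in the paragraph preceding the lemma, and the reason positive and negative literals are treated asymmetrically in both $P$ and $Q$.
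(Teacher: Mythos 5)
Your proposal is correct and follows essentially the same route as the paper: characterize the elements of $P$ as $C$ with strings $w_i$ appended to the positive literals only, map each such clause to the corresponding ground instance by appending $\bot$, and observe that this is a bijection because $\bot$ is the unique constant. You additionally spell out two points the paper leaves implicit (the reading of $P$ as closure under right concatenation and the multiset subtlety with repeated literals), which strengthens rather than changes the argument.
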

\begin{proof}
For each element $D$ of $P$, $D$ has the form $D:=s_1w_1\approx t_1w_1\vee\cdots\vee s_mw_m \approx t_mw_m \vee u_1\not\approx v_1 \vee\cdots \vee u_n \not\approx v_n$ for some $w_1,\ldots, w_m \in \Sigma^*$. (If $w_i=\lambda$ for all $1\leq i \leq m$, then $D$ is simply $C$.) Now, we map each element $D$ of $P$ to $D'$ in $Q$, where $D':=s_1w_1\bot\approx t_1w_1\bot \vee\cdots\vee s_mw_m\bot\approx t_mw_m\bot \vee u_1\bot\not\approx v_1\bot \vee\cdots \vee u_n\bot \not\approx v_n\bot$. Since $\bot$ is the only constant symbol in $\Sigma\cup \{\bot\}$, it is easy to see that this mapping is well-defined and bijective.
\end{proof}

\begin{defn}\label{defn:defn1}\normalfont
(i) We say that every term in $T(\Sigma\cup \{\bot\})$ is a $g$-\emph{term}. (Recall that we remove parentheses for notational convenience.)\\
(ii) Let $s\approx t$ (resp.\;$s\rightarrow t$) be an equation (resp. a rule) over $\Sigma^*$. We say that $su\bot\approx tu\bot$ (resp.\;$su\bot\rightarrow tu\bot$) for some string $u$ is a $g$-\emph{equation} (resp.\;a $g$-\emph{rule}) of $s\approx t$ (resp.\;$s\rightarrow t$).\\
(iii) Let $s\not\approx t$ be a negative literal over $\Sigma^*$. We say that $s\bot\not\approx t\bot$ is a (negative) $g$-\emph{literal} of $s\not\approx t$.\\
(iv) Let $C:=s_1\approx t_1\vee\cdots\vee s_m\approx t_m \vee u_1\not\approx v_1 \vee \cdots\vee u_n\not\approx v_n$ be a clause over $\Sigma^*$. We say that $s_1w_1\bot\approx t_1w_1\bot\vee\cdots\vee s_mw_m\bot\approx t_mw_m\bot \vee u_1\bot\not\approx v_1\bot \vee \cdots\vee u_n\bot\not\approx v_n\bot$ for some strings $w_1,\ldots,w_m$ is a $g$-\emph{clause} of clause $C$. Here, each $w_k\bot \in T(\Sigma\cup \{\bot\})$ for nonempty string $w_k$ in the $g$-clause is said to be a \emph{substitution part} of $C$.\\
(v) Let $\pi$ be an inference (w.r.t.\;$\mathfrak{S}$) with premises $C_1,\ldots, C_k$ and conclusion $D$. Then a $g$-\emph{instance} of $\pi$ is an inference (w.r.t.\;$\mathfrak{S}$) with premises $C_1',\ldots, C_k'$ and conclusion $D'$, where $C_1',\ldots, C_k'$ and $D'$ are $g$-clauses of $C_1,\ldots, C_k$ and $D$, respectively.
\end{defn}

Since each term in $T(\Sigma\cup \{\bot\})$ is viewed as a string, we may consider inferences between $g$-clauses using $\mathfrak{S}$. Note that concatenating a (nonempty) string at the end of a $g$-term is not allowed for any $g$-term over $T(\Sigma\cup \{\bot\})$. For example, $abc\bot d$ is not a $g$-term, and $a\bot \not\approx b\bot \vee abc\bot d \approx def\bot d$ is not a $g$-clause. We emphasize that we are only concerned with inferences between (legitimate) $g$-clauses here.\\
\indent We may also use the length-lexicographic ordering $\succ_g$ on $g$-terms. Given a total precedence ordering on $\Sigma\cup \{\bot\}$ for which $\bot$ is minimal, it can be easily verified that $\succ_g$ is a total reduction ordering on $T(\Sigma\cup \{\bot\})$. We simply denote the multiset extension $\succ_g^{mul}$ of $\succ_g$ as $\succ_g$ for notational convenience. Similarly, we denote ambiguously all orderings on $g$-terms, $g$-equations, and $g$-clauses over $T(\Sigma\cup \{\bot\})$ by $\succ_g$. Now, we consider the lifting of inferences of $\mathfrak{S}$ between $g$-clauses over $T(\Sigma\cup \{\bot\})$ to inferences of $\mathfrak{S}$ between clauses over $\Sigma^*$. Let $C_1,\ldots, C_n$ be clauses over $\Sigma^*$ and let
\begin{prooftree}
\AxiomC{$C_1'\ldots C_n'$}
\UnaryInfC{$C'$}
\end{prooftree}
be an inference between their $g$-clauses, where $C_i'$ is a $g$-clause of $C_i$ for all $1\leq i \leq n$. We say that this inference between $g$-clauses can be \emph{lifted} if there is an inference
\begin{prooftree}
\AxiomC{$C_1\ldots C_n$}
\UnaryInfC{$C$}
\end{prooftree}
such that $C'$ is a $g$-clause of $C$. In what follows, we assume that a $g$-literal $L'_i$ in $C'_i$ is selected in the same way as $L_i$ in $C_i$, where $L_i$ is a negative literal in $C_i$ and $L'_i$ is a $g$-literal of $L_i$.

Lifting of an inference between $g$-clauses is possible if it does not correspond to a $g$-instance of an inference (w.r.t.\;$\mathfrak{S}$) into a substitution part of a clause, which is not necessary (see~\cite{Bachmair1995, Nieuwenhuis2001}). Suppose that there is an inference between $g$-clauses $C_1'\ldots C_n'$ with conclusion $C'$ and there is also an inference between clauses $C_1\ldots C_n$ over $\Sigma^*$ with conclusion $C$, where $C_i'$ is a $g$-clause of $C_i$ for all $1\leq i \leq n$. Then, the inference between $g$-clauses $C_1'\ldots C_n'$ over $T(\Sigma\cup \{\bot\})$ can be lifted to the inference between clauses $C_1\ldots C_n$ over $\Sigma^*$ in such a way that $C'$ is a $g$-clause of $C$. This can be easily verified for each inference rule in $\mathfrak{S}$.

\begin{exam}\normalfont
Consider the following Superposition inference with $g$-clauses:
\begin{prooftree}
\AxiomC{$ad\bot \approx cd\bot \vee aabb\bot\approx cbb\bot $}\AxiomC{$abb\bot \approx db\bot$}
\BinaryInfC{$ad\bot \approx cd\bot \vee adb\bot \approx cbb\bot$}
\end{prooftree}
where $ad\bot \approx cd\bot \vee aabb\bot\approx cbb\bot$ (resp.\;$abb\bot \approx db\bot$) is a $g$-clause of $a\approx c \vee aa\approx c$ (resp.\;$ab \approx d$) and $aabb\bot \succ_g cbb\bot$ (resp.\;$abb\bot \succ_g db\bot$). This Superposition inference between $g$-clauses can be lifted to the following Superposition inference between clauses over $\Sigma^*$:
\begin{prooftree}
\AxiomC{$a\approx c \vee aa\approx c$}\AxiomC{$ab \approx d$}
\BinaryInfC{$a\approx c \vee ad \approx cb$}
\end{prooftree}
where $aa\succ c$ and $ab \succ d$. We see that conclusion $ad\bot \approx cd\bot \vee adb\bot \approx cbb\bot$ of the Superposition inference between the above $g$-clauses is a $g$-clause of conclusion $a\approx c \vee ad \approx cb$ of this inference.
\end{exam}
\begin{exam}\normalfont
Consider the following Rewrite inference with $g$-clauses:
\begin{prooftree}
\AxiomC{$a\bot \not\approx d\bot \vee aabb\bot\not\approx cd\bot$}\AxiomC{$abb\bot \approx cb\bot$}
\BinaryInfC{$a\bot \not\approx d\bot \vee acb\bot \not\approx cd\bot$}
\end{prooftree}
where $aabb\bot\not\approx cd\bot$ is selected and $a\bot \not\approx d\bot \vee aabb\bot\not\approx cd\bot$ (resp.\;$abb\bot \approx cb\bot$) is a $g$-clause of $a\not\approx d \vee aabb\not\approx cd$ (resp.\;$ab \approx c$) with $abb\bot \succ_g cb\bot$. This Rewrite inference between $g$-clauses can be lifted to the following Rewrite inference between clauses over $\Sigma^*$:
\begin{prooftree}
\AxiomC{$a \not\approx d \vee aabb\not\approx cd$}\AxiomC{$ab \approx c$}
\BinaryInfC{$a \not\approx d  \vee acb \not\approx cd$}
\end{prooftree}
where $aabb\not\approx cd$ is selected and $ab\succ c$. We see that conclusion $a\bot \not\approx d\bot \vee acb\bot \not\approx cd\bot$ of the Rewrite inference between the above $g$-clauses is a $g$-clause of conclusion $a \not\approx d  \vee acb \not\approx cd$ of this inference.
\end{exam}

\section{Redundancy and Contraction Techniques}
By Lemma~\ref{lem:representation} and Definition~\ref{defn:defn1}, we may translate a clause $C:=s_1\approx t_1\vee\cdots\vee s_m \approx t_m \vee u_1\not\approx v_1 \vee\cdots \vee u_n \not\approx v_n$ over $\Sigma^*$ with all its implied clauses using the monotonicity property into the clause $s_1(x_1)\approx t_1(x_1) \vee \cdots \vee s_m(x_m)\approx t_m(x_m) \vee u_1\bot\not\approx v_1\bot \vee \cdots \vee u_n\bot\not\approx v_n\bot$ over $T(\Sigma\cup \{\bot\}, X)$ with all its ground instances, where $x_1,\ldots, x_m$ are distinct variables in $X$, each symbol from $\Sigma$ is interpreted as a unary function symbol, and $\bot$ is the only constant symbol. This allows us to adapt the existing notion of redundancy found in the literature~\cite{Nieuwenhuis2001, Bachmair1994}.

\begin{defn}\label{defn:Herbrand}\normalfont 
(i) Let $R$ be a set of $g$-equations or $g$-rules. Then the congruence $\leftrightarrow_R^*$ defines an equality \emph{Herbrand Interpretation} $I$, where the domain of $I$ is $T(\Sigma\cup \{\bot\})$. Each unary function symbol $s\in \Sigma$ is interpreted as the unary function $s_I$, where $s_I(u\bot)$ is the $g$-term $su\bot$. (The constant symbol $\bot$ is simply interpreted as the constant $\bot$.) The only predicate $\approx$ is interpreted by $s\bot\approx t\bot$ if $s\bot\leftrightarrow_R^*t\bot$. We denote by $R^*$ the interpretation $I$ defined by $R$ in this way. $I$ \emph{satisfies} (is a \emph{model} of) a $g$-clause $\Gamma \rightarrow \Delta$, denoted by $I\models \Gamma \rightarrow \Delta$, if $I \not\supseteq \Gamma$ or $I \cap \Delta \neq \emptyset$. In this case, we say that $\Gamma \rightarrow \Delta$ is \emph{true} in $I$. We say that $I$ \emph{satisfies} a clause $C$ over $\Sigma^*$ if $I$ satisfies all $g$-clauses of $C$. We say that $I$ \emph{satisfies} a set of clauses $S$ over $\Sigma^*$, denoted by $I \models S$, if $I$ satisfies every clause in $S$.\\
(ii) A $g$-clause $C$  \emph{follows} from a set of $g$-clauses $\{C_1,\ldots,C_n\}$, denoted by $\{C_1,\ldots,C_n\}\models C$, if $C$ is true in every model of $\{C_1, \ldots, C_k\}$.
\end{defn}

\begin{defn}\label{defn:redundancy}\normalfont Let $S$ be a set of clauses over $\Sigma^*$.\\ 
(i) A $g$-clause $C$ is \emph{redundant} w.r.t.\;$S$ if there exist $g$-clauses $C_1', \ldots, C_k'$ of clauses $C_1, \ldots, C_k$ in $S$, such that $\{C_1',\ldots,C_k'\}\models C$ and $C\succ_g C_i'$ for all $1\leq i \leq k$. A clause in $S$ is \emph{redundant} w.r.t.\;$S$ if all its $g$-clauses are redundant  w.r.t.\;$S$.\\
(ii) An inference $\pi$ with conclusion $D$ is \emph{redundant} w.r.t.\;$S$ if for every $g$-instance of $\pi$ with maximal premise $C'$ (w.r.t.\;$\succ_g$) and conclusion $D'$, there exist $g$-clauses $C_1', \ldots, C_k'$ of clauses $C_1,\ldots, C_k$ in $S$ such that $\{C_1',\ldots,C_k'\}\models D'$ and $C' \succ_g C_i'$ for all $1\leq i \leq k$, where $D'$ is a $g$-clause of $D$.
\end{defn}

\begin{lem} If an equation $l\approx r$ simplifies a clause $C\vee l_1ll_2\bowtie v$ into $C\vee l_1rl_2\bowtie v$ using the Simplification rule, then $C\vee l_1ll_2\bowtie v$ is redundant w.r.t.\;$\{C\vee l_1rl_2\bowtie v, l\approx r\}$.
\end{lem}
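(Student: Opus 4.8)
The plan is to unfold the Simplification rule together with Definition~\ref{defn:redundancy} and check the redundancy condition for an arbitrary $g$-clause $D$ of $C\vee l_1ll_2\bowtie v$. By Definition~\ref{defn:defn1}(iv), such a $D$ has the shape $C'\vee \ell_D$, where $C'$ is a $g$-clause of $C$ and $\ell_D$ is the $g$-literal coming from $l_1 l l_2 \bowtie v$: when $\bowtie$ is $\approx$ one has $\ell_D = l_1 l l_2 w\bot\approx v w\bot$ for the substitution part $w$ attached to that literal, and when $\bowtie$ is $\not\approx$ one has $\ell_D = l_1 l l_2\bot\not\approx v\bot$ (a negative literal carries no substitution part). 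As witnesses for redundancy I would take $D_1 := C'\vee\ell_{D_1}$, the $g$-clause of $C\vee l_1 r l_2\bowtie v$ built from the same substitution parts as $D$ but with $\ell_{D_1} = l_1 r l_2 w\bot\approx v w\bot$ (resp.\ $l_1 r l_2\bot\not\approx v\bot$), and $D_2$, the $g$-equation $l l_2 w\bot\approx r l_2 w\bot$ (resp.\ $l l_2\bot\approx r l_2\bot$) of $l\approx r$, which is a legitimate $g$-equation by Definition~\ref{defn:defn1}(ii). It then suffices to prove $\{D_1,D_2\}\models D$ together with $D\succ_g D_1$ and $D\succ_g D_2$.

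For the entailment I would reason over an arbitrary equality Herbrand interpretation $I=R^*$ (Definition~\ref{defn:Herbrand}) satisfying $D_1$ and $D_2$. If $I$ satisfies some literal of $C'$ then it satisfies $D$; otherwise it must satisfy $\ell_{D_1}$. In the positive case this gives $l_1 r l_2 w\bot\leftrightarrow_R^* v w\bot$, while $D_2$ gives $l l_2 w\bot\leftrightarrow_R^* r l_2 w\bot$; since $\leftrightarrow_R^*$ is a congruence, in particular closed under prepending the unary function symbols making up the string $l_1$, the latter yields $l_1 l l_2 w\bot\leftrightarrow_R^* l_1 r l_2 w\bot$, and transitivity gives $l_1 l l_2 w\bot\leftrightarrow_R^* v w\bot$, i.e.\ $I$ satisfies $\ell_D$ and hence $D$. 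The negative case is the contrapositive of the same congruence chain: from $l_1 r l_2\bot\not\leftrightarrow_R^* v\bot$ and $l_1 l l_2\bot\leftrightarrow_R^* l_1 r l_2\bot$ one gets $l_1 l l_2\bot\not\leftrightarrow_R^* v\bot$.

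The ordering conditions I would reduce to comparisons of $g$-terms under the length-lexicographic ordering $\succ_g$ and then push through the multiset extensions used to order literals and clauses. Admissibility of $\succ_g$ applied to $l\succ r$ (side condition (iii) of Simplification) gives $l_1 l l_2 w\bot\succ_g l_1 r l_2 w\bot$, from which $\ell_D\succ_g\ell_{D_1}$ and then $D\succ_g D_1$ follow since $D$ and $D_1$ agree except in that one literal. For $D\succ_g D_2$ I would invoke side condition (ii), $l_1\neq\lambda$: then the $g$-term $l_1 l l_2 w\bot$ (with $w$ dropped in the negative case) is strictly longer, hence strictly $\succ_g$-greater, than each of the two $g$-terms occurring in $D_2$, so the literal $\ell_D$, which contains that $g$-term on one side, strictly dominates the single literal of $D_2$ in the literal ordering, whence $D\succ_g D_2$ at the clause level. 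Note that redundancy is defined purely through $\models$ and $\succ_g$, so the selection-function condition (i) of Simplification plays no role.

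I expect the delicate point to be the verification of $D\succ_g D_2$: unlike $D_1$, the clause $D_2$ is not a subclause of $D$, and the Simplification rule assumes no orientation of the simplified literal $\ell_D$ (we do not know whether $l_1 l l_2 w\bot\succ_g v w\bot$), so the comparison must rest entirely on the length drop guaranteed by $l_1\neq\lambda$ and then be propagated carefully through the two layers of multiset extension (one for positive literals, collapsing to a single layer for negative literals). Keeping the cases $\bowtie\in\{\approx,\not\approx\}$ separate throughout — because of the substitution part $w$ that is present only in the positive case — is the remaining bookkeeping that needs attention.
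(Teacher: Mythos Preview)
Your proposal is correct and follows essentially the same approach as the paper's proof: the same case split on $\bowtie$, the same two witnesses $D_1=C'\vee l_1rl_2w\bot\bowtie vw\bot$ (with $w$ absent in the negative case) and $D_2=ll_2w\bot\approx rl_2w\bot$, and the same use of $l\succ r$ for $D\succ_g D_1$ and of $l_1\neq\lambda$ for $D\succ_g D_2$. Your write-up simply spells out the congruence reasoning for the entailment and the multiset-extension reasoning for the ordering more explicitly than the paper does.
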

\begin{proof}
\indent Suppose that $l\approx r$ simplifies $D:=C\vee l_1ll_2\not\approx v$ into $C\vee l_1rl_2\not\approx v$, where $l_1ll_2\not\approx v$ is selected for $D$. Then, every $g$-clause $D'$ of $D$ has the form $D':=C'\vee l_1ll_2\bot \not\approx v\bot$, where $C'$ is a $g$-clause of $C$. Now, we may infer that $\{D'', ll_2\bot \approx rl_2\bot\}\models D'$, where $D'':=C'\vee l_1rl_2\bot \not\approx v\bot$ is a $g$-clause of $C\vee l_1rl_2\not\approx v$ and $ll_2\bot \approx rl_2\bot$ is a $g$-equation of $l\approx r$. We also have $D' \succ_g D''$ and $D' \succ_g ll_2\bot \approx rl_2\bot$, and thus the conclusion follows.\\
\indent Otherwise, suppose that $l\approx r$ simplifies $D:=C\vee  l_1ll_2\approx v$ into $C\vee  l_1rl_2\approx v$. Then every $g$-clause $D'$ of $D$ has the form $D':=C'\vee l_1ll_2w\bot \approx vw\bot$ for some $w\in \Sigma^*$, where $C'$ is a $g$-clause of $C$. Now, we have $\{D'', ll_2w\bot \approx rl_2w\bot\}\models D'$, where $D'':=C'\vee l_1rl_2w\bot \approx vw\bot$ is a $g$-clause of $C\vee l_1rl_2\approx v$ for some $w\in\Sigma^*$ and $ll_2w\bot \approx rl_2w\bot$ is a $g$-equation of $l\approx r$. We also have $D' \succ_g D''$ and $D' \succ_g ll_2w\bot \approx rl_2w\bot$ because $l_1$ is not $\lambda$ in the condition of the rule (i.e., $l_1ll_2w\bot \succ_g ll_2w\bot$), and thus the conclusion follows.
\end{proof}

We see that if $C$ subsumes $C^\prime$ with $C$ and $C^\prime$ containing the same number of literals, then they are the same when viewed as the finite multisets, so we can remove $C^\prime$. Therefore, we exclude this case in the following lemma.

\begin{lem}
If a clause $C$ subsumes a clause $D$ and $C$ contains fewer literals than $D$, then $D$ is redundant w.r.t.\;$\{C\}$.
\end{lem}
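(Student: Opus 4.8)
The plan is to unfold the definition of redundancy for clauses: by Definition~\ref{defn:redundancy}(i) it suffices to show that every $g$-clause of $D$ is redundant w.r.t.\;$\{C\}$. Since $C$ subsumes $D$ and, by hypothesis, $C$ has strictly fewer literals than $D$, viewing both as finite multisets of literals we may write $D = C \uplus E$ where $E$ is a \emph{nonempty} multiset of literals.

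Next I would fix an arbitrary $g$-clause $D'$ of $D$ and decompose it. A $g$-clause of $D$ is obtained by choosing a substitution part for each positive literal of $D$ (leaving negative literals unchanged except for the trailing $\bot$); restricting this choice to the positive literals that occur in the subclause $C$ yields a $g$-clause $C'$ of $C$, and the $g$-literals coming from $E$ form a nonempty submultiset $E'$, so that $D' = C' \uplus E'$ as multisets of $g$-literals. Then two things remain. First, $\{C'\}\models D'$: any model of the $g$-clause $C'$ satisfies one of its $g$-literals, which also occurs in $D'$, hence the model satisfies $D'$. Second, $D' \succ_g C'$: since $D'$ contains $C'$ as a proper submultiset (the extra part $E'$ being nonempty), there is no $g$-literal $x$ with $C'(x) > D'(x)$, so the defining condition of the multiset extension holds vacuously and $D' \succ_g C'$. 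By Definition~\ref{defn:redundancy}(i), $D'$ is redundant w.r.t.\;$\{C\}$; as $D'$ was an arbitrary $g$-clause of $D$, the clause $D$ is redundant w.r.t.\;$\{C\}$.

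The argument is essentially bookkeeping, so I do not expect a genuine obstacle. The two points that need care are: checking that the $C'$ obtained by restriction is indeed a legitimate $g$-clause of $C$ (it is, since positive literals retain admissible substitution parts and negative literals are untouched), and invoking the standard fact that a proper multiset superset is strictly larger in the multiset extension of any ordering, applied to $\succ_g$ on $g$-clauses.
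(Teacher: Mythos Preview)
Your proposal is correct and follows essentially the same approach as the paper's proof: write $D = C \vee B$ for nonempty $B$, decompose an arbitrary $g$-clause $D'$ as $C' \vee B'$ with $C'$ a $g$-clause of $C$, and observe that $\{C'\}\models D'$ with $D' \succ_g C'$. You simply spell out in more detail the two points the paper leaves implicit, namely that the restriction $C'$ is a legitimate $g$-clause of $C$ and that a proper multiset superset is strictly larger in the multiset extension of $\succ_g$.
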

\begin{proof}
Suppose that $C$ subsumes $D$ and $C$ contains fewer literals than $D$. Then $D$ can be denoted by $C\vee B$ for some nonempty clause $B$. Now, for every $g$-clause $D^{\prime}:=C'\vee B'$ of $D$, we have $\{C'\}\models D^{\prime}$ with $D^{\prime} \succ_g C'$, where $C'$ and $B'$ are $g$-clauses of $C$ and $B$, respectively. Thus, $D$ is redundant w.r.t.\;$\{C\}$.
\end{proof}

\begin{lem}
A tautology $C \vee s\approx s$ is redundant.
\end{lem}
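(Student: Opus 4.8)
The plan is to unfold the definition of redundancy and observe that a tautology is semantically trivial: every $g$-clause of $C\vee s\approx s$ is true in \emph{every} equality Herbrand interpretation, so it is redundant with respect to the empty set of premises (the degenerate case $k=0$ in Definition~\ref{defn:redundancy}(i)), hence redundant with respect to any set $S$ of clauses over $\Sigma^*$.

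First I would fix an arbitrary $g$-clause $D'$ of $C\vee s\approx s$. By Definition~\ref{defn:defn1}(iv), $D'$ has the form $D':=C'\vee sw\bot\approx sw\bot$ for some string $w$, where $C'$ is a $g$-clause of $C$ and $sw\bot\in T(\Sigma\cup\{\bot\})$. Next, let $I=R^{*}$ be any equality Herbrand interpretation as in Definition~\ref{defn:Herbrand}(i). Since $\approx$ is interpreted by the congruence $\leftrightarrow_R^{*}$, which is in particular reflexive, we have $sw\bot\leftrightarrow_R^{*}sw\bot$, so $sw\bot\approx sw\bot\in I$. Writing $D'$ as $\Gamma\rightarrow\Delta$, the literal $sw\bot\approx sw\bot$ occurs in the succedent $\Delta$, so $I\cap\Delta\neq\emptyset$ and therefore $I\models D'$. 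As $I$ was arbitrary, $D'$ is true in every model, i.e.\ $\emptyset\models D'$.

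Finally, I would instantiate Definition~\ref{defn:redundancy}(i) with $k=0$: the ordering condition $D'\succ_g C_i'$ for all $1\le i\le k$ is then vacuous, and $\emptyset\models D'$ has just been established, so $D'$ is redundant with respect to the empty set, and a fortiori with respect to any $S$. Since $D'$ ranged over all $g$-clauses of $C\vee s\approx s$, the clause $C\vee s\approx s$ is redundant. There is no real obstacle here; the only point needing a moment's care is the bookkeeping of substitution parts, namely that the tautological literal $s\approx s$ may pick up a substitution part $w$ when passing to a $g$-clause, but it then becomes $sw\bot\approx sw\bot$, still an instance of reflexivity, so the argument applies uniformly to every $g$-clause.
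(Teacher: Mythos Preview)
Your proposal is correct and follows essentially the same approach as the paper: both arguments observe that every $g$-clause of $C\vee s\approx s$ has the form $C'\vee su\bot\approx su\bot$ and is therefore valid, which makes the original clause redundant via the vacuous ($k=0$) case of Definition~\ref{defn:redundancy}(i). Your write-up simply spells out the reflexivity step and the $k=0$ instantiation that the paper leaves implicit.
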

\begin{proof}
It is easy to see that for every $g$-clause $C' \vee su\bot\approx su\bot$ of $C \vee s\approx s$, we have $\models C' \vee su\bot\approx su\bot$, where $u\in \Sigma^*$ and $C'$ is a $g$-clause of $C$. Thus, $C \vee s\approx s$ is redundant.
\end{proof}

\section{Refutational Completeness}\label{sec:completeness}
In this section, we adapt the model construction and equational theorem proving techniques used in~\cite{Bachmair1994, Nieuwenhuis2001, Kim2021} and show that $\mathfrak{S}$ with the contraction rules is refutationally complete.

\begin{defn}\normalfont A $g$-equation $s\bot \approx t\bot$ is \emph{reductive} for a $g$-clause $C:=D\vee s\bot\approx t\bot $ if $s\bot \approx t\bot$ is strictly maximal (w.r.t.\;$\succ_g$) in $C$ with $s\bot \succ_g t\bot$.
\end{defn}
 
\begin{defn}\normalfont\label{defn:model}(Model Construction)\;Let $S$ be a set of clauses over $\Sigma^*$. We use induction on $\succ_g$ to define the sets $R_C,E_C$, and $I_C$ for all $g$-clauses $C$ of clauses in $S$. Let $C$ be such a $g$-clause of a clause in $S$ and suppose that $E_{C^\prime}$ has been defined for all $g$-clauses $C^\prime$ of clauses in $S$ for which $C\succ_g C^\prime$. Then we define by $R_C=\bigcup_{C\succ_g C^\prime} E_{C^\prime}$. We also define by $I_C$ the equality interpretation $R_C^*$, which denotes the least congruence containing $R_C$.\\
\indent Now, let $C:=D \vee s\bot \approx t\bot$ such that $C$ is not a $g$-clause of a clause with a selected literal in $S$. Then $C$ produces $E_C=\{s\bot\rightarrow t\bot\}$ if the following conditions are met: (1) $I_C\not\models C$, (2) $I_C\not\models t\bot\approx t'\bot$ for every $s\bot\approx t'\bot$ in $D$, (3) $s\bot\approx t\bot$ is reductive for $C$, and (4) $s\bot$ is irreducible by $R_C$. We say that $C$ is \emph{productive} and produces $E_C$ if it satisfies all of the above conditions. Otherwise, $E_C=\emptyset$. Finally, we define $I_S$ as the equality interpretation $R_S^*$, where $R_S=\bigcup_C E_C$ is the set of all $g$-rules produced by $g$-clauses of clauses in~$S$.
\end{defn}

\begin{lem}\label{lem:confluent} (i) $R_S$ has the Church-Rosser property.\\
(ii) $R_S$ is terminating.\\
(iii) For $g$-terms $u\bot$ and $v\bot$, $I_S \models u\bot\approx v\bot$ if and only if $u\bot \downarrow_{R_S} v\bot$.\\
(iv) If $I_S\models s\approx t$, then $I_S \models usv \approx utv$ for nonempty strings $s,t,u,v \in \Sigma^*$.
\end{lem}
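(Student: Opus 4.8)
The plan is to establish the four items in order, since each builds on the previous ones, and to treat $R_S$ as a ground term rewriting system over $T(\Sigma\cup\{\bot\})$ so that standard rewriting machinery applies. For (i), the Church–Rosser property, I would show that $R_S$ is \emph{left-reduced} (no left-hand side of a rule in $R_S$ is reducible by another rule) and then invoke the usual argument that a left-reduced terminating ground rewriting system is confluent because all its critical pairs are trivial. The key observation is the model-construction bookkeeping: when a productive $g$-clause $C$ contributes $E_C=\{s\bot\to t\bot\}$, condition (4) guarantees $s\bot$ is irreducible by $R_C=\bigcup_{C\succ_g C'}E_{C'}$, and condition (3) (reductivity, hence $s\bot\succ_g t\bot$) together with the induction on $\succ_g$ ensures no later rule can rewrite inside $s\bot$ either — any such rule would have a strictly $\succ_g$-smaller left-hand side and would already have been in $R_C$. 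So distinct rules have overlap-free left-hand sides; all critical pairs vanish, and confluence follows. For (ii), termination: every rule $s\bot\to t\bot$ in $R_S$ satisfies $s\bot\succ_g t\bot$ (reductivity), and $\succ_g$ is a total reduction ordering on $T(\Sigma\cup\{\bot\})$ (stated in Section~\ref{subsec:lifting}, being the length-lexicographic ordering with $\bot$ minimal), so $\to_{R_S}\,\subseteq\,\succ_g$ and termination is immediate from well-foundedness of $\succ_g$.

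Item (iii) is then the routine corollary: $I_S$ is by definition $R_S^*$, the least congruence containing $R_S$, which equals $\leftrightarrow_{R_S}^*$; by (i) and (ii), $R_S$ is convergent, so $\leftrightarrow_{R_S}^*$ coincides with joinability $\downarrow_{R_S}$. Hence $I_S\models u\bot\approx v\bot$ iff $u\bot\leftrightarrow_{R_S}^* v\bot$ iff $u\bot\downarrow_{R_S}v\bot$. I would just cite the Church–Rosser theorem for abstract reduction systems here rather than reprove it.

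Item (iv) is where the monotonicity design choices pay off, and I expect it to be the main obstacle — not because it is deep, but because one must unwind the translation between strings and $g$-terms carefully. Given $I_S\models s\approx t$ with $s,t\in\Sigma^*$, this means (by the convention that $I_S$ satisfies a clause over $\Sigma^*$ iff it satisfies all its $g$-clauses, and the equation $s\approx t$ has $g$-clauses $sw\bot\approx tw\bot$ for all $w\in\Sigma^*$) that $I_S\models sw\bot\approx tw\bot$ for \emph{every} $w\in\Sigma^*$; in particular, taking $w=v$, we get $I_S\models sv\bot\approx tv\bot$, i.e. $sv\bot\downarrow_{R_S}tv\bot$ by (iii). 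Now I would use the fact that each letter $u_i\in\Sigma$ is interpreted as the unary function $a\mapsto u_i a$, so $I_S$ being a congruence gives $I_S\models u(sv\bot)\approx u(tv\bot)$, that is $I_S\models usv\bot\approx utv\bot$; and since $usv\bot$ is exactly the $\lambda$-instance $g$-clause of $usv\approx utv$ — wait, more precisely I must check that $usv\bot\approx utv\bot$ together with closure of $I_S$ under the $g$-equations of $usv\approx utv$ (which are $usvw\bot\approx utvw\bot$) holds: but $usv\approx utv$ as a clause over $\Sigma^*$ has $g$-clauses $usvw\bot\approx utvw\bot$, and each of these is obtained from $I_S\models sw'\bot\approx tw'\bot$ (with $w'=vw$) by prepending $u$ via congruence. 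So $I_S$ satisfies every $g$-clause of $usv\approx utv$, hence $I_S\models usv\approx utv$. The delicate point to get right is that the hypothesis $I_S\models s\approx t$ really does supply \emph{all} instances $sw\bot\approx tw\bot$, not merely $s\bot\approx t\bot$ — this is precisely the content of Lemma~\ref{lem:representation} and the satisfaction convention for clauses over $\Sigma^*$, and it is what makes the nonemptiness hypotheses on $s,t,u,v$ harmless (indeed they are not even needed for the argument, only for matching the statement's phrasing).
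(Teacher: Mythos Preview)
Your proposal is correct and follows essentially the same approach as the paper: left-reducedness gives Church--Rosser, reductivity gives termination, convergence yields (iii), and (iv) unfolds the satisfaction convention for clauses over $\Sigma^*$ and uses that $I_S$ is a congruence to prepend $u$. Your arguments for (i) and (iv) are in fact more detailed than the paper's (which simply asserts that there are no overlaps among left-hand sides and invokes Definition~\ref{defn:Herbrand}(i) directly), but the underlying route is identical.
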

\begin{proof}
(i) $R_S$ is left-reduced because there are no overlaps among the left-hand sides of rewrite rules in $R_S$, and thus $R_S$ has the Church-Rosser property.\\
(ii) For each rewrite rule $l\bot \rightarrow r\bot$ in $R_S$, we have $l\bot \succ_g r\bot$, and thus $R_S$ is terminating.\\
(iii) Since $R_S$ has the Church-Rosser property and is terminating by (i) and (ii), respectively, $R_S$ is convergent. Thus, $I_S \models u\bot\approx v\bot$ if and only if $u\bot\downarrow_{R_S} v\bot$ for $g$-terms $u\bot$ and $v\bot$.\\
(iv) Suppose that $I_S\models s\approx t$ for nonempty strings $s$ and $t$. Then, we have $I_S\models svw\bot\approx tvw\bot$ for all strings $v$ and $w$ by Definition~\ref{defn:Herbrand}(i). Similarly, since $I_S$ is an equality Herbrand interpretation, we also have $I_S\models usvw\bot\approx utvw\bot$ for all strings $u$, which means that $I_S\models usv\approx utv$ by Definition~\ref{defn:Herbrand}(i).
\end{proof}

Lemma~\ref{lem:confluent}(iv) says that the monotonicity assumption used in this paper holds w.r.t.\;a model constructed by Definition~\ref{defn:model}.

\begin{defn}\normalfont\label{defn:Saturation}Let $S$ be a set of clauses over $\Sigma^*$. We say that $S$ is \emph{saturated} under $\mathfrak{S}$ if every inference by $\mathfrak{S}$ with premises in $S$ is redundant w.r.t.\;$S$.
\end{defn}

\begin{defn}\normalfont
Let $C:=s_1\approx t_1\vee\cdots\vee s_m\approx t_m \vee u_1\not\approx v_1 \vee \cdots\vee u_n\not\approx v_n$ be a clause over $\Sigma^*$, and $C'=s_1w_1\bot\approx t_1w_1\bot\vee\cdots\vee s_mw_m\bot\approx t_mw_m\bot \vee u_1\bot\not\approx v_1\bot \vee \cdots\vee u_n\bot\not\approx v_n\bot$ for some strings $w_1,\ldots,w_m$ be a $g$-clause of $C$. We say that $C'$ is a \emph{reduced} $g$-\emph{clause} of $C$ w.r.t.\;a rewrite system $R$ if every $w_i\bot$, $1\leq i \leq m$, is not reducible by $R$.
\end{defn}

In the proof of the following  lemma, we write $s[t]_{suf}$ to indicate that $t$ occurs in $s$ as a suffix and (ambiguously) denote by $s[u]_{suf}$ the result of replacing the occurrence of $t$ (as a suffix of $s$) by $u$.

\begin{lem}\label{lem:model} Let $S$ be saturated under $\mathfrak{S}$ not containing the empty clause and $C$ be a $g$-clause of a clause in $S$. Then $C$ is true in $I_S$. More specifically,\\
(i) If $C$ is redundant w.r.t.\;$S$, then it is true in $I_S$.\\
(ii)\,If $C$ is not a reduced $g$-clause of a clause in $S$ w.r.t.\,$R_S$,\,then it is true in $I_S$.\\
(iii) If $C:=C' \vee s\bot \approx t\bot$ produces the rule $s\bot \rightarrow t\bot$, then $C'$ is false and $C$ is true in $I_S$.\\
(iv) If $C$ is a $g$-clause of a clause in $S$ with a selected literal, then it is true in~$I_S$.\\
(v) If $C$ is non-productive, then it is true in $I_S$.
\end{lem}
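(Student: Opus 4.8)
The plan is to prove the statement by well-founded induction on $\succ_g$ over the $g$-clauses of clauses in $S$. For a fixed $g$-clause $C$ we assume, as the induction hypothesis, that every $g$-clause $B$ of a clause in $S$ with $C \succ_g B$ is true in $I_S$; equivalently, since $R_S$ restricted below $C$ agrees with $R_C$, such $B$ are true in $I_C$. Then I would establish the five itemized claims, each of which either reduces $C$ to smaller $g$-clauses (so the induction hypothesis applies) or uses the productivity conditions of Definition~\ref{defn:model} together with the Church–Rosser and monotonicity properties from Lemma~\ref{lem:confluent}. The global statement ``$C$ is true in $I_S$'' then follows: by (i) we may assume $C$ is not redundant, by (ii) we may assume $C$ is a reduced $g$-clause, by (iv) we may assume $C$ has no selected literal, and finally either $C$ is productive and (iii) applies, or $C$ is non-productive and (v) applies.

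The individual items go as follows. For (i), redundancy of $C$ w.r.t.\;$S$ means $\{C_1',\dots,C_k'\} \models C$ with each $C \succ_g C_i'$ and each $C_i'$ a $g$-clause of a clause in $S$; by the induction hypothesis each $C_i'$ is true in $I_S$, hence so is $C$. For (ii), if some substitution part $w_i\bot$ of $C$ is reducible by $R_S$, rewrite it with the relevant $g$-rule; the rewritten $g$-clause $\widehat C$ is again a $g$-clause of the same clause in $S$ (this is where I must check that rewriting a substitution part keeps us inside the set of legitimate $g$-clauses, using Lemma~\ref{lem:confluent}(iv) to see the two $g$-clauses are $I_S$-equivalent), and $C \succ_g \widehat C$, so the induction hypothesis makes $\widehat C$ true in $I_S$, and equivalence transfers truth to $C$. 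For (iv), a $g$-clause of a clause with a selected negative literal $u_j\bot\not\approx v_j\bot$: if $u_j\bot \downarrow_{R_S} v_j\bot$ then I use the Equality Resolution inference (which is redundant by saturation) to reduce to smaller $g$-clauses, or more directly I compare $u_j\bot, v_j\bot$ with respect to $\succ_g$ and use the Rewrite inference on the larger side — whichever premise is maximal in the corresponding $g$-instance is $C$ itself, so redundancy of that inference gives the conclusion-$g$-clause as a consequence of strictly smaller $g$-clauses, which the induction hypothesis handles; if $u_j\bot \not\downarrow_{R_S} v_j\bot$ then by Lemma~\ref{lem:confluent}(iii) the literal $u_j\bot\not\approx v_j\bot$ is true in $I_S$ and hence so is $C$.

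For (iii), if $C = C' \vee s\bot\approx t\bot$ produces $s\bot\to t\bot$, then condition (1) of Definition~\ref{defn:model} gives $I_C \not\models C'$, and I must upgrade this to $I_S \not\models C'$: negative literals of $C'$ are unaffected since no new rules below them appear, and for a positive literal $p\bot\approx q\bot$ of $C'$, maximality of $s\bot\approx t\bot$ plus conditions (2) and (4) (irreducibility of $s\bot$) ensure the rule $s\bot\to t\bot$ and all later rules cannot make $p\bot\approx q\bot$ true. Since $s\bot\to t\bot \in R_S$, clearly $I_S\models s\bot\approx t\bot$, so $C$ is true. For (v), suppose $C$ is a reduced, non-redundant $g$-clause of a clause in $S$ with no selected literal, $C$ is non-productive, and (for contradiction) $I_S\not\models C$. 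I would split on which productivity condition fails. If $I_C\models C$ but $I_S\not\models C$, some later-produced rule $l\bot\to r\bot$ with $l\bot$ a proper subterm consideration forces a positive literal to become true — contradiction. If condition (2) fails, there are $s\bot\approx t\bot$ and $s\bot\approx t'\bot$ in $C$ with $I_C\models t\bot\approx t'\bot$, and a Factoring/Paramodulation inference (redundant by saturation) yields a strictly smaller $g$-clause true in $I_S$ by induction, which makes $C$ true. If condition (3) fails, the maximal literal of $C$ is either negative (handled like (iv) via Rewrite/Equality Resolution) or a positive literal $s\bot\approx t\bot$ that is not strictly maximal, so it is duplicated and Factoring applies, or has $s\bot\prec_g t\bot$, which is impossible for a reductive orientation; in every case a redundant inference with maximal premise $C$ reduces to smaller $g$-clauses. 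If condition (4) fails, $s\bot$ is reducible by $R_C$, contradicting that $C$ is a reduced $g$-clause. The main obstacle is item (v), specifically bookkeeping the case analysis on the failed productivity condition and matching each failure to the correct $\mathfrak{S}$-inference whose redundancy (guaranteed by saturation) delivers the needed smaller $g$-clauses — and checking throughout that ``maximal premise'' in Definition~\ref{defn:redundancy}(ii) is indeed $C$, so that the induction hypothesis is available. A secondary subtlety is ensuring in (ii) and (v) that rewriting substitution parts or forming $g$-instances of inferences stays within the legitimate $g$-clauses (no string concatenated past $\bot$), which is exactly what the setup preceding this lemma was arranged to guarantee.
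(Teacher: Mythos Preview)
Your overall inductive scheme and the treatment of (i)--(iv) are essentially the paper's approach. The genuine gap is in (v), specifically your handling of the failure of condition~(4).

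You write: ``If condition (4) fails, $s\bot$ is reducible by $R_C$, contradicting that $C$ is a reduced $g$-clause.'' This is incorrect. A \emph{reduced $g$-clause} only requires that each \emph{substitution part} $w_i\bot$ be irreducible by $R_S$; it says nothing about the full side $s_iw_i\bot$ of a literal. Concretely, write the maximal literal of $C$ as $su\bot \approx tu\bot$, where $s\approx t$ is the literal in the underlying clause and $u\bot$ is the substitution part. Even if $u\bot$ is irreducible, $su\bot$ can still be reducible by a rule $l\tau\bot \to r\tau\bot$ whose left-hand side overlaps $s$: either $l$ lies entirely inside $s$ (so $s = u_1 l u_3$), or $l$ straddles the boundary between $s$ and $u$ (so $s = u_1u_2$, $l = u_2u_3$, $u = u_3\tau$). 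These two overlap patterns are exactly what the Rewrite and Superposition rules are designed to cover, and the paper's proof handles them by exhibiting the corresponding inference from the underlying clauses in $S$, invoking saturation to get redundancy, and then applying the induction hypothesis to the conclusion's $g$-clause. Your proposal skips this entirely, which means you never actually use the Superposition rule --- and that rule is precisely what condition~(4) is there to force.

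A smaller issue in the same spirit: your treatment of the failure of condition~(2) mentions ``Factoring/Paramodulation'' but omits the case split. When $tu\bot \neq t'vw\bot$ (with $su\bot = svw\bot$), the equality $I_C \models tu\bot \approx t'vw\bot$ forces $tu\bot$ to be reducible by some rule in $R_C$, and again one must split on whether the redex lies inside $t$ (Rewrite) or overlaps the end of $t$ (Paramodulation). Only when $tu\bot = t'vw\bot$ does Factoring alone suffice. Finally, note that in (v) you may assume $C$ has no negative literals at all (since the selection function selects a negative literal in every clause that has one, and you have already disposed of selected literals in (iv)); this makes the ``$I_C \models C$ implies $I_S \models C$'' step immediate and your discussion of later rules there unnecessary.
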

\begin{proof}
We use induction on $\succ_g$ and assume that (i)--(v) hold for every $g$-clause $D$ of a clause in $S$ with $C\succ_g D$.

\indent (i) Suppose that $C$ is redundant w.r.t.\;$S$. Then there exist $g$-clauses $C_1', \ldots, C_k'$ of clauses $C_1, \ldots, C_k$ in $S$, such that $\{C_1',\ldots,C_k'\}\models C$ and $C\succ_g C_i'$ for all $1\leq i \leq k$. By the induction hypothesis,  each $C_i'$, $1\leq i \leq k$, is true in $I_S$. Thus, $C$ is true in $I_S$.

\indent (ii) Suppose that $C$ is a $g$-clause of a clause $B:=s_1\approx t_1\vee\cdots\vee s_m\approx t_m \vee u_1\not\approx v_1 \vee \cdots\vee u_n\not\approx v_n$ in $S$ but is not a reduced $g$-clause w.r.t.\;$R_S$. Then $C$ is of the form $C:=s_1w_1\bot\approx t_1w_1\bot\vee\cdots\vee s_mw_m\bot\approx t_mw_m\bot \vee u_1\bot\not\approx v_1\bot \vee \cdots\vee u_n\bot\not\approx v_n\bot$ for $w_1,\ldots,w_m\in \Sigma^*$ and some $w_k\bot$ is reducible by $R_S$. Now, consider $C'=s_1w_1'\bot\approx t_1w_1'\bot\vee\cdots\vee s_mw_m'\bot\approx t_mw_m'\bot \vee u_1\bot\not\approx v_1\bot \vee \cdots\vee u_n\bot\not\approx v_n\bot$, where $w_i'\bot$ is the normal form of $w_i\bot$ w.r.t.\;$R_S$ for each $1\leq i \leq m$. Then $C'$ is a reduced $g$-clause of $B$ w.r.t.\;$R_S$, and is true in $I_S$ by the induction hypothesis. Since each $w_i\bot \approx w_i'\bot$, $1\leq i \leq m$, is true in $I_S$ by Lemma~\ref{lem:confluent}(iii), we may infer that $C$ is true in $I_S$.\\
\indent In the remainder of the proof of this lemma, we assume that $C$ is neither redundant w.r.t.\;$S$ nor is it a reducible $g$-clause w.r.t.\;$R_S$ of some clause in $S$. (Otherwise, we are done by (i) or (ii).)

\indent (iii) Suppose that $C:=C' \vee s\bot \approx t\bot$ produces the rule $s\bot \rightarrow t\bot$.  Since $s\bot \rightarrow t\bot \in E_C \subset R_S$, we see that $C$ is true in $I_S$. We show that $C^\prime$ is false in $I_S$. Let $C':=\Gamma \rightarrow \Delta$. Then $I_C\not\models C'$ by Definition~\ref{defn:model}, which implies that $I_C\cap \Delta = \emptyset, I_C\supseteq \Gamma$, and thus $I_S \supseteq \Gamma$. It remains to show that $I_S\cap \Delta = \emptyset$. Suppose to the contrary that $\Delta$ contains an equation $s'\bot \approx t'\bot$ which is true in $I_S$. Since $I_C\cap\Delta =\emptyset$, we must have $s'\bot\approx t'\bot \in I\setminus I_C$, which is only possible if $s\bot=s'\bot$ and $I_C\models t\bot\approx t'\bot$, contradicting condition (2) in Definition~\ref{defn:model}.

\indent (iv) Suppose that $C$ is of the form $C:=B' \vee s\bot \not\approx t\bot$, where $s\bot \not\approx t\bot$ is a $g$-literal of a selected literal in a clause in $S$ and $B'$ is a $g$-clause of $B$.\\
\indent (iv.1) If $s\bot=t\bot$, then $B'$ is an equality resolvent of $C$ and the Equality Resolution inferences can be lifted. By saturation of $S$ under $\mathfrak{S}$ and the induction hypothesis, $B'$ is true in $I_S$. Thus, $C$ is true in $I_S$.\\
\indent (iv.2) If $s\bot\neq t\bot$, then suppose to the contrary that $C$ is false in $I_S$. Then we have $I_S \models s\bot \approx t\bot$, which implies that $s\bot$ or $t\bot$ is reducible by $R_S$ by Lemma~\ref{lem:confluent}(iii). Without loss of generality, we assume that $s\bot$ is reducible by $R_S$ with some rule $lu\bot \rightarrow ru\bot$ for some $u\in\Sigma^*$ produced by a productive $g$-clause $D' \vee lu\bot \approx ru\bot$ of a clause $D \vee l \approx r \in S$. This means that $s\bot$ has a suffix $lu\bot$. Now, consider the following inference by Rewriting:
\begin{prooftree}
\AxiomC{$B\vee s[lu]_{suf}\not\approx t$}\AxiomC{$D\vee l\approx r$}
\BinaryInfC{$B \vee D\vee s[ru]_{suf}\not\approx t$}
\end{prooftree}
where $s[lu]_{suf}\not\approx t$ is selected for the left premise. The conclusion of the above inference has a $g$-clause $C':=B' \vee D' \vee s\bot[ru\bot]_{suf} \not\approx t\bot$. By saturation of $S$ under $\mathfrak{S}$ and the induction hypothesis, $C'$ must be true in $I_S$. Moreover, we see that $s\bot[ru\bot]_{suf} \not\approx t\bot$ is false in $I_S$ by Lemma~\ref{lem:confluent}(iii), and $D'$ are false in $I_S$ by (iii). This means that $B'$ is true in $I_S$, and thus $C$ (i.e., $C=B' \vee s\bot \not\approx t\bot$) is true in $I_S$, which is the required contradiction.

\indent (v) If $C$ is non-productive, then we assume that $C$ is not a $g$-clause of a clause with a selected literal. Otherwise, the proof is done by (iv). This means that $C$ is of the form $C:=B' \vee su\bot \approx tu\bot$, where $su\bot \approx tu\bot$ is maximal in $C$ and $B'$ contains no selected literal. If $su\bot = tu\bot$, then we are done. Therefore, without loss of generality, we assume that $su\bot \succ_g tu\bot$. As $C$ is non-productive, it means that (at least) one of the conditions in Definition~\ref{defn:model} does not hold.\\
\indent If condition (1) does not hold, then $I_C \models C$, so we have $I_S \models C$, i.e., $C$ is true in $I_S$. If condition (1) holds but condition (2) does not hold, then $C$ is of the form $C:=B_1' \vee su\bot\approx tu\bot \vee svw\bot\approx t'vw\bot$, where $su=svw$ (i.e.,\;$u=vw$) and $I_C \models tu\bot\approx t'vw\bot$.\\
\indent Suppose first that $tu\bot = t'vw\bot$. Then we have $t=t'$ since $u=vw$. Now, consider the following inference by Factoring:
\begin{prooftree}
\AxiomC{$B_1\vee s\approx t \vee sv\approx tv$}
\UnaryInfC{$B_1\vee sv\approx tv$}
\end{prooftree}
The conclusion of the above inference has a $g$-clause $C':=B_1' \vee svw\bot\approx tvw\bot$, i.e.,\;$C':=B_1' \vee su\bot\approx tu\bot$ since $u=vw$. By saturation of $S$ under $\mathfrak{S}$ and the induction hypothesis, $C'$ is true in $I_S$, and thus $C$ is true in $I_S$.\\
\indent Otherwise, suppose that $tu\bot \neq t'vw\bot$. Then we have $tu\bot \downarrow_{R_C} t'vw\bot$ by Lemma~\ref{lem:confluent}(iii) and $tu\bot \succ_g t'vw\bot$ because $su\bot \approx tu\bot$ is maximal in $C$. This means that $tu\bot$ is reducible by $R_C$ by some rule $l\tau\bot \rightarrow r\tau\bot$ produced by a productive $g$-clause $D' \vee l\tau\bot \approx r\tau\bot$ of a clause $D \vee l \approx r \in S$. Now, we need to consider two cases: 

\indent (v.1) If $t$ has the form $t:=u_1u_2$ and $l$ has the form $l:=u_2u_3$, then consider the following inference by Paramodulation:
\begin{prooftree}
\AxiomC{$B\vee s\approx u_1u_2$}\AxiomC {$D\vee u_2u_3\approx r$}
\BinaryInfC{$B\vee D \vee su_3\approx u_1r$}
\end{prooftree}
The conclusion of the above inference has a $g$-clause $C':=B' \vee D' \vee su_3\tau\bot \approx u_1r\tau\bot$ with $u=u_3\tau$. By saturation of $S$ under $\mathfrak{S}$ and the induction hypothesis, $C'$ is true in $I_S$. Since $D'$ is false in $I_S$ by (iii), either $B'$ or $su_3\tau\bot \approx u_1r\tau\bot$ is true in $I_S$. If $B'$ is true in $I_S$, so is $C$. If $su_3\tau\bot \approx u_1r\tau\bot$ is true in $I_S$, then $su\bot \approx tu\bot$ is also true in $I_S$ by Lemma~\ref{lem:confluent}(iii), where $t=u_1u_2$ and $u=u_3\tau$. Thus, $C$ is true in $I_S$.

\indent (v.2) If $t$ has the form $t:=u_1u_2u_3$ and $l$ has the form $l:=u_2$, then consider the following inference by Rewrite:
\begin{prooftree}
\AxiomC{$B\vee s\approx u_1u_2u_3$}\AxiomC {$D\vee u_2\approx r$}
\BinaryInfC{$B\vee D\vee s\approx u_1ru_3$}
\end{prooftree}
The conclusion of the above inference has a $g$-clause $C'':=B' \vee D' \vee su\bot \approx u_1ru_3u\bot$ with $\tau = u_3u$. By saturation of $S$ under $\mathfrak{S}$ and the induction hypothesis, $C''$ is true in $I_S$. Since $D'$ is false in $I_S$ by (iii), either $B'$ or $su\bot \approx u_1ru_3u\bot$ is true in $I_S$. Similarly to case (v.1), if $B'$ is true in $I_S$, so is $C$. If $su\bot \approx u_1ru_3u\bot$ is true in $I_S$, then $su\bot \approx tu\bot$ is also true in $I_S$ by Lemma~\ref{lem:confluent}(iii), where $t=u_1u_2u_3$. Thus, $C$ is true in $I_S$.\\
\indent If conditions (1) and (2) hold but condition (3) does not hold, then $su\bot \approx tu\bot$ is only maximal but is not strictly maximal, so we are in the previous case. (Since $\succ_g$ is total on $g$-clauses, condition (2) does not hold.) If conditions (1)--(3) hold but condition (4) does not hold, then $su\bot$ is reducible by $R_C$ by some rule $l\tau\bot \rightarrow r\tau\bot$ produced by a productive $g$ clause $D' \vee l\tau\bot \approx r\tau\bot$ of a clause $D \vee l \approx r \in S$. Again, we need to consider two cases:

\indent (v.1') If $s$ has the form $s:=u_1u_2$ and $l$ has the form $l:=u_2u_3$, then consider the following inference by Superposition:
\begin{prooftree}
\AxiomC{$B\vee u_1u_2 \approx t$}\AxiomC {$D\vee u_2u_3\approx r$}
\BinaryInfC{$B\vee D \vee u_1r \approx tu_3$}
\end{prooftree}
The conclusion of the above inference has a $g$-clause $C':=B' \vee D' \vee u_1r\tau\bot \approx tu_3\tau\bot$ with $u=u_3\tau$. By saturation of $S$ under $\mathfrak{S}$ and the induction hypothesis, $C'$ is true in $I_S$. Since $D'$ is false in $I_S$ by (iii), either $B'$ or $u_1r\tau\bot \approx tu_3\tau\bot$ is true in $I_S$. If $B'$ is true in $I_S$, so is $C$. If $u_1r\tau\bot \approx tu_3\tau\bot$ is true in $I_S$, then $su\bot \approx tu\bot$ is also true in $I_S$ by Lemma~\ref{lem:confluent}(iii), where $s=u_1u_2$ and $u=u_3\tau$. Thus, $C$ is true in $I_S$.

\indent (v.2') If $s$ has the form $s:=u_1u_2u_3$ and $l$ has the form $l:=u_2$, then consider the following inference by Rewrite:
\begin{prooftree}
\AxiomC{$B\vee u_1u_2u_3 \approx t$}\AxiomC {$D\vee u_2\approx r$}
\BinaryInfC{$B\vee D\vee u_1ru_3\approx t$}
\end{prooftree}
The conclusion of the above inference has a $g$-clause $C'':=B' \vee D' \vee u_1ru_3u\bot \approx tu\bot$ with $\tau = u_3u$. By saturation of $S$ under $\mathfrak{S}$ and the induction hypothesis, $C''$ is true in $I_S$. Since $D'$ is false in $I_S$ by (iii), either $B'$ or $u_1ru_3u\bot \approx tu\bot$ is true in $I_S$. Similarly to case (v.1'),  If $B'$ is true in $I_S$, so is $C$. If $u_1ru_3u\bot \approx tu\bot$ is true in $I_S$, then $su\bot \approx tu\bot$ is also true in $I_S$ by Lemma~\ref{lem:confluent}(iii), where $s=u_1u_2u_3$. Thus, $C$ is true in $I_S$.
\end{proof}

\begin{defn}\normalfont\label{defn:theoremproving} (i) A \emph{theorem proving derivation} is a sequence of sets of clauses $S_0=S,S_1,\ldots$ over $\Sigma^*$ such that:\\
\indent (i.1) Deduction: $S_i=S_{i-1} \cup \{C\}$ if $C$ can be deduced from premises in $S_{i-1}$ by applying an inference rule in $\mathfrak{S}$.\\
\indent (i.2) Deletion: $S_i = S_{i-1} \setminus \{D\}$ if $D$ is redundant w.r.t.\;$S_{i-1}$.\footnote{Here, an inference by Simplification combines the Deduction step for $C\vee l_1rl_2\bowtie v$ and the Deletion step for $C\vee l_1ll_2\bowtie v$ (see the Simplification rule).}

 (ii) The set $S_\infty:=\bigcup_i(\bigcap_{j\geq i}S_j)$ is the \emph{limit} of the theorem proving derivation.
\end{defn}

\AxiomC{$S\cup\{C\vee l_1ll_2\bowtie v,\; l\approx r\}$}
\UnaryInfC{$S\cup\{C\vee l_1rl_2\bowtie v,\; l\approx r\}$}

We see that the soundness of a theorem proving derivation w.r.t.\;the proposed inference system is straightforward, i.e., $S_i\models S_{i+1}$ for all $i\geq 0$.

\begin{defn}\normalfont A theorem proving derivation $S_0, S_1,S_2,\ldots$ is \emph{fair} w.r.t.\;the inference system $\mathfrak{S}$ if every inference by $\mathfrak{S}$ with premises in $S_\infty$ is redundant w.r.t.\;$\bigcup_j S_j$.
\end{defn}

\begin{lem}\label{lem:inclusion} Let $S$ and $S'$ be sets of clauses over $\Sigma^*$.\\
(i) If $S\subseteq S'$, then any clause which is redundant w.r.t.\;$S$ is also redundant w.r.t.\;$S'$.\\
(ii) If $S\subseteq S'$ and all clauses in $S'\setminus S$ are redundant w.r.t.\;$S'$, then any clause or inference which is redundant w.r.t.\;$S'$ is also redundant w.r.t.\;$S$.
\end{lem}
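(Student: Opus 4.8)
The plan is to obtain part~(i) straight from the definition of redundancy and to reduce part~(ii) to a single claim about $g$-clauses that is then proved by well-founded induction on $\succ_g$. For part~(i), suppose $C$ is redundant w.r.t.\ $S$. By Definition~\ref{defn:redundancy}, every $g$-clause $C^\ast$ of $C$ admits $g$-clauses $C_1',\ldots,C_k'$ of clauses $C_1,\ldots,C_k\in S$ with $\{C_1',\ldots,C_k'\}\models C^\ast$ and $C^\ast\succ_g C_i'$ for all $i$; since $S\subseteq S'$, the clauses $C_1,\ldots,C_k$ also lie in $S'$, so the same witnesses show $C^\ast$ is redundant w.r.t.\ $S'$, and hence $C$ is redundant w.r.t.\ $S'$ (the argument is the same, and more immediate, when $C$ is itself a $g$-clause).

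For part~(ii), the crux is the claim that \emph{a $g$-clause $C$ redundant w.r.t.\ $S'$ is also redundant w.r.t.\ $S$}, which I would prove by induction on $C$ along $\succ_g$ (legitimate, since $\succ_g$ is terminating on $g$-clauses). Given such a $C$, fix witnesses $D_1,\ldots,D_n$, each a $g$-clause of a clause in $S'$, with $\{D_1,\ldots,D_n\}\models C$ and $C\succ_g D_i$. For each $i$: if $D_i$ is a $g$-clause of a clause in $S$, keep it; otherwise its parent clause lies in $S'\setminus S$ and so is redundant w.r.t.\ $S'$, whence $D_i$ is redundant w.r.t.\ $S'$, and since $D_i\prec_g C$ the induction hypothesis yields a set $\mathcal{E}_i$ of $g$-clauses of clauses in $S$, each $\prec_g D_i$, with $\mathcal{E}_i\models D_i$. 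Let $\mathcal{E}$ be the union of the kept $D_i$'s together with the sets $\mathcal{E}_i$: every member of $\mathcal{E}$ is a $g$-clause of a clause in $S$ and is $\prec_g C$, and by monotonicity of $\models$ we get $\mathcal{E}\models D_i$ for every $i$, hence $\mathcal{E}\models C$, so $C$ is redundant w.r.t.\ $S$.

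The remaining bookkeeping is routine. A clause redundant w.r.t.\ $S'$ has all of its $g$-clauses redundant w.r.t.\ $S'$, hence all redundant w.r.t.\ $S$ by the claim, hence it is redundant w.r.t.\ $S$. For an inference $\pi$ with conclusion $D$ redundant w.r.t.\ $S'$, fix an arbitrary $g$-instance with maximal premise $C'$ and conclusion $D'$, take the witnesses $C_1',\ldots,C_k'$ from $S'$ with $\{C_1',\ldots,C_k'\}\models D'$ and $C'\succ_g C_i'$, replace each $C_i'$ whose parent lies in $S'\setminus S$ (such a $C_i'$ is redundant w.r.t.\ $S'$) by a set of $g$-clauses of clauses in $S$, each $\prec_g C_i'$, that collectively entail it, and combine exactly as above to obtain witnesses from $S$, all $\prec_g C'$, entailing $D'$; since the $g$-instance was arbitrary, $\pi$ is redundant w.r.t.\ $S$. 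I expect the only genuine subtlety to be the one the induction handles: the clauses substituted for a redundant $D_i$ (or $C_i'$) may again come from $S'\setminus S$, so a single replacement step does not suffice; well-foundedness of $\succ_g$ is what makes this cascade terminate, and keeping each replacement strictly below the clause it replaces is what keeps the final witnesses below $C$ (resp.\ below the maximal premise $C'$), as Definition~\ref{defn:redundancy} demands.
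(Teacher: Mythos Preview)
Your argument is correct. The paper, however, organizes part~(ii) a little differently: instead of an explicit well-founded induction, it picks a \emph{minimal} witnessing (multi)set $N=\{C_1',\ldots,C_n'\}$ of $g$-clauses of clauses in $S'$ (minimal w.r.t.\ the multiset extension of $\succ_g$) and argues by contradiction that no $C_j'\in N$ can be redundant w.r.t.\ $S'$---otherwise one could replace $C_j'$ by strictly smaller witnesses and contradict minimality. Since every clause of $S'\setminus S$ is redundant w.r.t.\ $S'$, all $C_j'$ must therefore be $g$-clauses of clauses in $S$, and the same $N$ already witnesses redundancy w.r.t.\ $S$; the inference case is handled identically. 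The two routes rest on the same fact (well-foundedness of $\succ_g$): the paper packages the cascade into a single minimality step, which is slightly slicker, while your explicit induction makes the termination of the replacement process transparent and avoids having to talk about minimal multisets at all.
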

\begin{proof} The proof of part (i) is obvious. For part (ii), suppose that a clause $C$ is redundant w.r.t.\;$S'$ and let $C'$ be a $g$-clause of it. Then there exists a minimal set $N:=\{C_1', \ldots, C_n'\}$ (w.r.t.\;$\succ_g$) of $g$-clauses of clauses in $S'$ such that $N\models C'$ and $C'\succ_g C_i'$ for all $1\leq i \leq n$. We claim that all $C_i'$ in $N$ are not redundant w.r.t.\;$S'$, which shows that $C'$ is redundant w.r.t.\;$S$. Suppose to the contrary that some $C_j'$ is redundant w.r.t.\;$S'$. Then there exist a set $N':=\{D_1', \ldots, D_m'\}$ of $g$-clauses of clauses in $S'$ such that $N'\models C_j'$ and $C_j'\succ_g D_i'$ for all $1\leq i \leq m$. This means that we have $\{C_1',\ldots, C_{j-1}', D_1',\ldots, D_m', C_{j+1}',\ldots, C_n'\}\models C'$, which contradicts our minimal choice of the set $N=\{C_1', \ldots, C_n'\}$.\\
\indent Next, suppose an inference $\pi$ with conclusion $D$ is redundant w.r.t.\;$S'$ and let $\pi'$ be a $g$-instance of it such that $B$ is the maximal premise and $D'$ is the conclusion of $\pi'$ (i.e.,\;a $g$-clause of $D$). Then there exists a minimal set $P:=\{D_1', \ldots, D_n'\}$ (w.r.t.\;$\succ_g$) of $g$-clauses of clauses in $S'$ such that $P\models D'$ and $B\succ_g D_i'$ for all $1\leq i \leq n$. As above, we may infer that all $D_i'$ in $P$ are not redundant w.r.t.\;$S'$, and thus $\pi'$ is redundant w.r.t.\;$S$.
\end{proof}

\begin{lem}\label{lem:fairderivation} Let $S_0,S_1,\ldots$ be a fair theorem proving derivation w.r.t.\;$\mathfrak{S}$. Then $S_\infty$ is saturated under $\mathfrak{S}$.
\end{lem}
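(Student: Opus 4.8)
The plan is to reduce the statement entirely to Lemma~\ref{lem:inclusion}. Write $S_\omega := \bigcup_j S_j$ for the union of all sets occurring in the derivation, and note first that $S_\infty \subseteq S_\omega$, since $\bigcap_{j\ge i} S_j \subseteq S_i \subseteq S_\omega$ for every $i$. The goal, by Definition~\ref{defn:Saturation}, is that every $\mathfrak{S}$-inference with premises in $S_\infty$ is redundant w.r.t.\ $S_\infty$; fairness already gives redundancy w.r.t.\ $S_\omega$, so it suffices to transfer redundancy from $S_\omega$ down to $S_\infty$, which is exactly what Lemma~\ref{lem:inclusion}(ii) provides once its hypothesis is verified.

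First I would check the hypothesis of Lemma~\ref{lem:inclusion}(ii), namely that every clause in $S_\omega \setminus S_\infty$ is redundant w.r.t.\ $S_\omega$. Let $C \in S_\omega \setminus S_\infty$. Then $C \in S_k$ for some $k$, but since $C \notin S_\infty = \bigcup_i \bigcap_{j \ge i} S_j$, the clause $C$ is not eventually persistent, so after appearing it must later be dropped; concretely there is an index $m$ with $C \in S_{m-1}$ and $C \notin S_m$. By Definition~\ref{defn:theoremproving}(i.2) (Deletion; recall that a Simplification step deletes its left premise in the same way), this transition is possible only if $C$ is redundant w.r.t.\ $S_{m-1}$. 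Since $S_{m-1} \subseteq S_\omega$, Lemma~\ref{lem:inclusion}(i) then gives that $C$ is redundant w.r.t.\ $S_\omega$, as needed.

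With that in hand, I would apply Lemma~\ref{lem:inclusion}(ii) with $S := S_\infty$ and $S' := S_\omega$ (using $S_\infty \subseteq S_\omega$ and the paragraph above): every clause or inference that is redundant w.r.t.\ $S_\omega$ is redundant w.r.t.\ $S_\infty$. Now let $\pi$ be an arbitrary inference by $\mathfrak{S}$ whose premises all lie in $S_\infty$. By fairness of the derivation, $\pi$ is redundant w.r.t.\ $S_\omega$, hence, by the preceding sentence, redundant w.r.t.\ $S_\infty$. Since $\pi$ was arbitrary, $S_\infty$ is saturated under $\mathfrak{S}$.

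The only genuine work is the middle paragraph: isolating the moment at which a non-persistent clause leaves the derivation, observing that it was redundant at that moment, and promoting this to redundancy w.r.t.\ $S_\omega$ via monotonicity of redundancy under supersets. The main obstacle I anticipate is the elementary but slightly fiddly bookkeeping that a clause $C \in S_\omega \setminus S_\infty$ really does witness an index $m$ with $C \in S_{m-1}$ and $C \notin S_m$ — one must argue from $C \notin \bigcup_i \bigcap_{j\ge i} S_j$ that $C$ cannot be eventually stable, so, having once appeared in some $S_k$, it is subsequently removed; once that is in place, everything else is a direct citation of Lemma~\ref{lem:inclusion} and the fairness hypothesis.
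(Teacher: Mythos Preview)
Your proposal is correct and follows essentially the same route as the paper: show that every clause in $\bigcup_j S_j \setminus S_\infty$ is redundant w.r.t.\ $\bigcup_j S_j$ (via the deletion step and Lemma~\ref{lem:inclusion}(i)), then invoke fairness and Lemma~\ref{lem:inclusion}(ii) to transfer redundancy of inferences down to $S_\infty$. The paper additionally opens with a case split on whether $\square \in S_\infty$, but that split is not actually used in the remainder of its argument, so your omission of it is harmless.
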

\begin{proof}
If $S_\infty$ contains the empty clause, then it is obvious that $S_\infty$ is saturated under $\mathfrak{S}$. Therefore, we assume that the empty clause is not in $S_\infty$.\\
\indent If a clause $C$ is deleted in a theorem proving derivation, then $C$ is redundant w.r.t.\;some $S_j$. By Lemma~\ref{lem:inclusion}(i), it is also redundant w.r.t.\;$\bigcup_jS_j$. Similarly, every clause in $\bigcup_jS_j\setminus S_\infty$ is redundant w.r.t.\;$\bigcup_jS_j$.\\
\indent By fairness, every inference $\pi$ by $\mathfrak{S}$ with premises in $S_\infty$ is redundant w.r.t.\;$\bigcup_j S_j$. Using Lemma~\ref{lem:inclusion}(ii) and the above, $\pi$ is also redundant w.r.t.\;$S_\infty$, which means that $S_\infty$ is saturated under $\mathfrak{S}$.
\end{proof}

\begin{thm}\label{thm:sat} Let $S_0,S_1,\ldots$ be a fair theorem proving derivation w.r.t.\;$\mathfrak{S}$. If $S_\infty$ does not contain the empty clause, then $I_{S_\infty} \models S_0$ (i.e., $S_0$ is satisfiable.)
\end{thm}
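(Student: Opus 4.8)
The plan is to derive the theorem from the saturation result (Lemma~\ref{lem:fairderivation}) together with the model-construction machinery (Definition~\ref{defn:model} and Lemma~\ref{lem:model}), and then to close the loop back to $S_0$ using soundness of the derivation. First I would invoke Lemma~\ref{lem:fairderivation} to conclude that $S_\infty$ is saturated under $\mathfrak{S}$; since by hypothesis $\square \notin S_\infty$, all the hypotheses of Lemma~\ref{lem:model} are met. Applying Lemma~\ref{lem:model} (the omnibus statement that every $g$-clause of a clause in $S_\infty$ is true in $I_{S_\infty}$, via cases (i)--(v) of the induction on $\succ_g$), I get that $I_{S_\infty}$ satisfies every $g$-clause of every clause in $S_\infty$, which by Definition~\ref{defn:Herbrand}(i) is exactly $I_{S_\infty} \models S_\infty$.

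Next I would transfer this from $S_\infty$ back to $S_0$. The key observation is that soundness of a theorem proving derivation — noted in the text right after Definition~\ref{defn:theoremproving}, namely $S_i \models S_{i+1}$ for all $i$ — runs in the right direction only for Deduction steps; for Deletion steps one needs that a redundant clause is a logical consequence of the remaining (smaller) clauses, which follows from the definition of redundancy (Definition~\ref{defn:redundancy}) together with Lemma~\ref{lem:confluent}(iv) guaranteeing that any Herbrand model closed under monotonicity which satisfies $S_\infty$ also satisfies all clauses deleted along the way. Concretely: every clause $C \in S_0 \setminus S_\infty$ was deleted because it became redundant w.r.t.\ some $S_j$, and by Lemma~\ref{lem:inclusion}(i) it is redundant w.r.t.\ $\bigcup_j S_j$; unwinding redundancy, each $g$-clause of $C$ follows from strictly smaller $g$-clauses of clauses in $\bigcup_j S_j$. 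An induction on $\succ_g$ over all $g$-clauses of clauses in $\bigcup_j S_j$ — with base case the $g$-clauses of clauses in $S_\infty$, which are true in $I_{S_\infty}$ by the previous paragraph — then shows that every $g$-clause of every clause ever appearing in the derivation, in particular every $g$-clause of every clause in $S_0$, is true in $I_{S_\infty}$. Hence $I_{S_\infty} \models S_0$, and in particular $S_0$ is satisfiable.

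I expect the main obstacle to be the bookkeeping in the transfer step: one must be careful that $I_{S_\infty}$ is itself a legitimate Herbrand interpretation closed under the monotonicity property — which is precisely the content of Lemma~\ref{lem:confluent}(iv), so that clauses-with-implied-clauses (the set $P$ in Lemma~\ref{lem:representation}) are all satisfied once the representative clause is — and that the $\succ_g$-induction used to push truth from $S_\infty$ up through the redundant clauses is well-founded, which holds because $\succ_g$ is a total reduction ordering on $T(\Sigma\cup\{\bot\})$ and hence on $g$-clauses. A secondary subtlety is that redundancy of $C$ w.r.t.\ $\bigcup_j S_j$ only gives entailment from $g$-clauses of clauses in $\bigcup_j S_j$, not directly from $S_\infty$; the induction handles this because any $g$-clause witnessing redundancy is strictly smaller, so the inductive hypothesis applies to it, whether it belongs to $S_\infty$ or was itself later deleted. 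Once these points are in place the argument is a short assembly of the already-established lemmas.
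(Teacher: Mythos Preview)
Your proof is correct and follows the same high-level structure as the paper's: invoke Lemma~\ref{lem:fairderivation} to get saturation of $S_\infty$, apply Lemma~\ref{lem:model} to obtain $I_{S_\infty} \models S_\infty$, and then argue that clauses in $S_0 \setminus S_\infty$ are also satisfied via redundancy. The one substantive difference is in the transfer step. The paper, after observing (via Lemma~\ref{lem:inclusion}(i)) that a deleted clause is redundant w.r.t.\ $\bigcup_j S_j$, immediately applies Lemma~\ref{lem:inclusion}(ii) to conclude it is redundant w.r.t.\ $S_\infty$; the witnessing smaller $g$-clauses then lie in $S_\infty$, so Lemma~\ref{lem:model} applies to them directly and no further induction is needed. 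Your well-founded $\succ_g$-induction over all $g$-clauses of clauses in $\bigcup_j S_j$ reaches the same conclusion but is essentially an inline reproof of the relevant instance of Lemma~\ref{lem:inclusion}(ii); invoking that lemma, as the paper does, is shorter and more modular. Your remarks about the direction of soundness and about Lemma~\ref{lem:confluent}(iv) are harmless but not needed: the transfer step runs entirely through redundancy rather than through any $S_{i+1} \models S_i$ claim, and satisfaction of a clause is by Definition~\ref{defn:Herbrand}(i) already satisfaction of all its $g$-clauses, so monotonicity closure of $I_{S_\infty}$ plays no explicit role in this particular argument.
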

\begin{proof}
Suppose that $S_0,S_1,\ldots$ is a fair theorem proving derivation w.r.t.\;$\mathfrak{S}$ and that its limit $S_\infty$ does not contain the empty clause. Then $S_\infty$ is saturated under $\mathfrak{S}$ by Lemma~\ref{lem:fairderivation}. Let $C'$ be a $g$-clause of a clause $C$ in $S_0$. If $C\in S_\infty$, then $C'$ is true in $I_{S_\infty}$ by Lemma~\ref{lem:model}. Otherwise, if $C\notin S_\infty$, then $C$ is redundant w.r.t.\;some $S_j$. It follows that $C$ redundant w.r.t.\;$\bigcup_j S_j$ by Lemma~\ref{lem:inclusion}(i), and thus redundant w.r.t.\;$S_\infty$ by Lemma~\ref{lem:inclusion}(ii). This means that there exist $g$-clauses $C_1', \ldots, C_k'$ of clauses $C_1, \ldots, C_k$ in $S_\infty$ such that $\{C_1',\ldots,C_k'\}\models C'$ and $C'\succ_g C_i'$ for all $1\leq i \leq k$. Since each $C_i'$, $1\leq i\leq k$, is true in $I_{S_\infty}$ by Lemma~\ref{lem:model}, $C'$ is also true in $I_{S_\infty}$, and thus the conclusion follows.
\end{proof}

The following theorem states that $\mathfrak{S}$ with the contraction rules is refutationally complete for clauses over $\Sigma^*$.

\begin{thm}\label{thm:maintheorem} Let $S_0,S_1,\ldots$ be a fair theorem proving derivation w.r.t.\;$\mathfrak{S}$. Then $S_0$ is unsatisfiable if and only if the empty clause is in some $S_j$.
\end{thm}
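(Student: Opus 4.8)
The plan is to derive Theorem~\ref{thm:maintheorem} from Theorem~\ref{thm:sat} together with the soundness of theorem proving derivations. The statement is a biconditional, so I would treat the two directions separately.

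For the ``if'' direction, suppose the empty clause $\square$ is in some $S_j$. Soundness of the derivation gives $S_i\models S_{i+1}$ for all $i\ge 0$ (this is noted in the excerpt right after Definition~\ref{defn:theoremproving}), so by transitivity $S_0\models S_j$. Since $\square\in S_j$, no interpretation can satisfy $S_j$, hence none can satisfy $S_0$; thus $S_0$ is unsatisfiable. I should be a little careful about the fact that deletion steps remove clauses, so $S_0\models S_j$ needs the monotone version of soundness; but the remark $S_i\models S_{i+1}$ already accounts for deletion (a redundant clause follows from the rest), so this is routine.

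For the ``only if'' direction, I argue contrapositively: assume $\square$ is \emph{not} in any $S_j$. The key point is that then $\square\notin S_\infty$, because $S_\infty=\bigcup_i\bigcap_{j\ge i}S_j\subseteq\bigcup_j S_j$, and if $\square$ were in $S_\infty$ it would be in some $S_j$. The one subtlety is that $\square$ can never be deleted once derived: $\square$ is redundant w.r.t.\ $S$ only if there are $g$-clauses $C_i'$ of clauses in $S$ with $\{C_1',\dots,C_k'\}\models\square$ and $\square\succ_g C_i'$, but $\square$ is the minimum in $\succ_g$, so no such $C_i'$ exists — hence once $\square$ enters some $S_j$ it stays, and ``$\square\in$ some $S_j$'' is equivalent to ``$\square\in S_\infty$.'' Given $\square\notin S_\infty$, Theorem~\ref{thm:sat} applies (the derivation is fair by hypothesis) and yields $I_{S_\infty}\models S_0$, so $S_0$ is satisfiable, contradicting the assumption on the other side of the biconditional. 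Combining the two directions gives the claim.

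I expect the whole proof to be short, essentially a one-paragraph deduction, with the only genuine content being the two observations above: (a) soundness propagates through both Deduction and Deletion so that unsatisfiability of $S_0$ follows from $\square\in S_j$, and (b) $\square$ is irredundant, so it persists into $S_\infty$, letting Theorem~\ref{thm:sat} do the heavy lifting for the converse. The main obstacle, such as it is, is making sure the reduction to Theorem~\ref{thm:sat} is clean — in particular phrasing the ``only if'' direction as a contrapositive and invoking fairness of the derivation as a standing hypothesis so that Theorem~\ref{thm:sat} is genuinely applicable. No new model construction or induction is needed here; everything rests on the earlier results.
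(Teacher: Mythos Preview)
Your proposal is correct and follows essentially the same route as the paper: soundness for the ``if'' direction, and the contrapositive via Theorem~\ref{thm:sat} for the ``only if'' direction. You supply a bit more detail than the paper does (explicitly noting $S_\infty\subseteq\bigcup_j S_j$ and the irredundancy of $\square$), but the structure and the key ingredients are identical.
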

\begin{proof}
Suppose that $S_0,S_1,\ldots$ be a fair theorem proving derivation w.r.t.\;$\mathfrak{S}$. By the soundness of the derivation, if the empty clause is in some $S_j$, then $S_0$ is unsatisfiable. Otherwise, if the empty clause is not in $S_k$ for all $k$, then $S_\infty$ does not contain the empty clause by the soundness of the derivation. Applying Theorem~\ref{thm:sat}, we conclude that $S_0$ is satisfiable.
\end{proof}

\section{Conditional Completion}\label{sec:HornCompletion}
In this section, we present a saturation procedure under $\mathfrak{S}$ for a set of conditional equations over $\Sigma^*$, where a conditional equation is naturally written as an equational Horn clause. A saturation procedure under $\mathfrak{S}$ can be viewed as \emph{conditional completion}~\cite{Dershowitz1991} for a set of conditional equations over $\Sigma^*$. If a set of conditional equations over $\Sigma^*$ is simply a set of equations over $\Sigma^*$, then the proposed saturation procedure (w.r.t.\;$\succ$) corresponds to a completion procedure for a string rewriting system. Conditional string rewriting systems were considered in~\cite{Deiss1992} in the context of embedding a finitely generated monoid with decidable word problem into a monoid presented by a finite convergent conditional presentation. It neither discusses a conditional completion (or a saturation) procedure, nor considers the word problems for conditional equations over $\Sigma^*$ in general.\\
\indent First, it is easy to see that a set of equations over $\Sigma^*$ is consistent. Similarly, a set of conditional equations $R$ over $\Sigma^*$ is consistent because each conditional equation has always a positive literal and we cannot derive the empty clause from $R$ using a saturation procedure under $\mathfrak{S}$ that is refutationally complete (cf.~Section 9 in~\cite{Dershowitz2001}). Since we only consider Horn clauses in this section, we neither need to consider the Factoring rule nor the Paramodulation rule in $\mathfrak{S}$. In the remainder of this section, by a conditional equational theory $R$, we mean a set of conditional equations $R$ over $\Sigma^*$. 

\begin{defn}\normalfont Given a conditional equational theory $R$ and two finite words $s,t \in \Sigma^*$, a \emph{word problem} w.r.t.\;$R$ is of the form $\phi:= s\approx^?t$. The \emph{goal} of this word problem is $s\not\approx t$. We say that a word problem $s\approx^?t$ w.r.t.\;$R$ is \emph{decidable} if there is a decision procedure for determining whether $s\approx t$ is entailed by $R$ (i.e., $R\models s\approx t$) or not (i.e., $R\not\models s\approx t$) .
\end{defn} 

Given a conditional equational theory $R$, let $G:=s\not\approx t$ be the goal of a word problem $s\approx^? t$ w.r.t.\;$R$. (Note that $G$ does not have any positive literal.) Then we see that $R \cup \{s\approx t\}$ is consistent if and only if $R\cup \{G\}$ is inconsistent.  This allows one to decide a word problem w.r.t.\;$R$ using the equational theorem proving procedure discussed in Section~\ref{sec:completeness}. 

\begin{lem}\label{lem:wordproblem} Let $R$ be a conditional equational theory finitely saturated under $\mathfrak{S}$. Then Rewrite together with Equality Resolution is terminating and refutationally complete for $R\cup \{G\}$, where $G$ is the goal of a word problem w.r.t.\;$R$.
\end{lem}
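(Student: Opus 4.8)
The plan is to establish two things separately: (a) \emph{termination} of the combined calculus (Rewrite plus Equality Resolution) on $R\cup\{G\}$, and (b) \emph{refutational completeness}, which will follow essentially from Theorem~\ref{thm:maintheorem} once we argue that no other inference rule of $\mathfrak{S}$ can ever be triggered in a fair derivation starting from $R\cup\{G\}$. For completeness, the key observation is that $R$ is already saturated under $\mathfrak{S}$, so by Definition~\ref{defn:Saturation} every inference among premises in $R$ is redundant. Hence the only non-redundant inferences in a derivation from $R\cup\{G\}$ must involve the goal clause $G=s\not\approx t$ (or clauses descended from it). Since $G$ is a negative unit clause with its single literal selected, and every conditional equation in $R$ is a Horn clause with exactly one positive literal, one checks rule by rule that the only applicable inferences are: Rewrite using a (positive) equation from $R$ to rewrite into the selected negative literal of a goal-descendant, and Equality Resolution on a goal-descendant of the form $C\vee s'\not\approx s'$. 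In particular Superposition, Paramodulation and Factoring are excluded because they require a premise with no selected literal \emph{and} a positive literal in the relevant position, and the negative clauses descending from $G$ never acquire positive literals, while the clauses of $R$ only produce redundant inferences among themselves. One must be slightly careful that Rewrite steps using two premises from $R$ are still redundant (they are, by saturation), and that the ``conditions'' $D$ attached by a Rewrite step using $D\vee u_2\approx t$ from $R$ are themselves negative-free Horn antecedents that get appended; but since we start from $G$ with no positive literal and $R$ is Horn, every goal-descendant remains a clause with no positive literal, so its selected literal is always well-defined and the eager-selection strategy keeps us inside the Rewrite/Equality-Resolution fragment. Fairness of such a restricted derivation together with Theorem~\ref{thm:maintheorem} then yields refutational completeness for $R\cup\{G\}$.

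For termination, I would argue by exhibiting a well-founded measure on goal-descendant clauses that strictly decreases under every Rewrite and Equality Resolution step. An Equality Resolution step strictly decreases the number of literals, so it can only be applied finitely often along any branch; the real work is bounding Rewrite. A Rewrite step on a negative literal replaces $u_1u_2u_3\not\approx v$ by $u_1tu_3\not\approx v$ using $u_2\approx t$ from $R$ with $u_2\succ t$; since $\succ$ is admissible (a reduction ordering), $u_1u_2u_3\succ u_1tu_3$, so the rewritten literal strictly decreases in $\succ$. The subtlety is that a Rewrite step may \emph{append} the antecedent $D$ of the conditional equation $D\vee u_2\approx t\in R$, enlarging the clause with fresh negative literals; so the naive ``multiset of literals under $\succ$'' measure need not decrease. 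Here is where finiteness of the saturation of $R$ is used: because $R$ is a \emph{finite} saturated set, there are only finitely many conditional equations available, each with a fixed finite antecedent, and every side condition $u_2\succ t$ forces strict decrease in the rewritten term. I would set up the measure as (lexicographically) the multiset of the left/right maximal sides of the negative literals of the clause under $\succ_g$ after lifting to $g$-clauses, together with a component counting literals, and verify that appending the (strictly $\succ$-smaller, since oriented) conditions of a rule from the finite set $R$ keeps this multiset decreasing — essentially the standard argument that conditional rewriting with a reductive, finite system terminates.

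The main obstacle I anticipate is precisely the bookkeeping around the appended conditions $D$ in the Rewrite rule: one has to rule out an infinite sequence of Rewrite steps that keeps the rewritten literal decreasing but indefinitely accumulates new (possibly large) negative literals from the antecedents of rules in $R$. The resolution is that these new literals are instances of the fixed finitely many antecedent literals occurring in $R$, each of which is itself only rewritable by the finite reductive system $R$, so the whole process is dominated by $\to_R$-reduction on a bounded family of terms, which terminates because $R$ is reductive ($l\succ r$ for each rule) and $\succ$ is well-founded; Equality Resolution then only removes literals. I would also invoke Lemma~\ref{lem:inclusion} and the fairness machinery from Section~\ref{sec:completeness} to transfer refutational completeness of $\mathfrak{S}$ to this restricted fragment, noting that a derivation using only Rewrite and Equality Resolution from $R\cup\{G\}$ is automatically fair once it is run to exhaustion, which it can be since it terminates.
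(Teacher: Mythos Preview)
Your completeness argument is essentially right and matches the paper's: saturation of $R$ makes all $R$-internal inferences redundant, and the eager selection of negative literals confines the non-redundant work to Rewrite and Equality Resolution on $G$ and its descendants.

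Your termination argument, however, has a genuine gap, and it stems from missing the force of condition~(iii) of the Rewrite rule. You worry at length about a Rewrite step with right premise $D\vee u_2\approx t\in R$ that appends the antecedent $D$ to the goal-descendant, and you try to bound this with a multiset measure and an appeal to ``the standard argument that conditional rewriting with a reductive, finite system terminates.'' But that appeal is unjustified: nothing in the paper's setup forces the antecedent literals in $D$ to be $\succ$-smaller than the rewritten literal (your parenthetical ``strictly $\succ$-smaller, since oriented'' conflates the orientation $u_2\succ t$ of the \emph{head} with a size bound on the \emph{conditions}, which does not follow). So as written your measure need not decrease.

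The point you are missing is that this situation never arises. Condition~(iii) of Rewrite requires that $D$ contain no selected literal. Since the selection function selects exactly one negative literal in every clause that has one, any conditional equation $D\vee u_2\approx t$ with nonempty $D$ has its selected literal \emph{in} $D$, so it cannot serve as the right premise of Rewrite. Hence only the positive \emph{unit} equations of $R$ are usable as right premises, $D$ is always empty, and every goal-descendant remains a single negative literal. A Rewrite step then produces $G'$ with $G\succ G'$, and termination is immediate from well-foundedness of $\succ$ together with the fact that there are only finitely many positive unit equations in the finite saturated set $R$. This is exactly the paper's argument, and it replaces your entire multiset bookkeeping with a one-line observation.
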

\begin{proof} Since $R$ is already saturated under $\mathfrak{S}$, inferences among Horn clauses in $R$ are redundant and remain redundant in $R\cup\{G\}$ for a theorem proving derivation starting with $R\cup\{G\}$. (Here, $\{G\}$ can be viewed as a \emph{set of support}~\cite{Bachmair1994} for a refutation of $R\cup\{G\}$.) Now, observe that $G$ is a negative literal, so it should be selected. The only inference rules in $\mathfrak{S}$ involving a selected literal are the Rewrite and Equality Resolution rule. Furthermore, the derived literals from $G$ w.r.t.\;Rewrite will also be selected eventually. Therefore, it suffices to consider positive literals as the right premise (because they contain no selected literal), and $G$ and its derived literals w.r.t.\;Rewrite as the left premise for the Rewrite rule. Observe also that if $G'$ is an immediate derived literal from $G$ w.r.t.\;Rewrite, then we see that $G\succ G'$. If $G$ or its derived literal from $G$ w.r.t.\;Rewrite becomes of the form $u\not\approx u$ for some $u\in \Sigma^*$, then it will also be selected and an Equality Resolution inference yields the empty clause. Since $\succ$ is terminating and there are only finitely many positive literals in $R$, we may infer that the Rewrite and Equality Resolution inference steps on $G$ and its derived literals are terminating. (The number of positive literals in $R$ remains the same during a theorem proving derivation starting with $R\cup\{G\}$ using our selection strategy.)\\
\indent Finally, since $\mathfrak{S}$ is refutationally complete by Thereom~\ref{thm:maintheorem}, Rewrite together with Equality Resolution is also refutationally complete for $R\cup\{G\}$.
\end{proof}

Given a finitely saturated conditional equational theory $R$ under $\mathfrak{S}$, we provide a decision procedure for the word problems w.r.t.\;$R$ in the following theorem.

\begin{thm}\label{thm:main2} Let $R$ be a conditional equational theory finitely saturated under $\mathfrak{S}$. Then the word problems w.r.t.\;$R$ are decidable by Rewrite together with Equality Resolution.
\end{thm}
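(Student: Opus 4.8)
The plan is to derive the theorem as a direct consequence of Lemma~\ref{lem:wordproblem}, so the work is mostly a matter of assembling the reduction to refutation and checking that the procedure it yields actually halts with a definite answer. First I would recall the standard semantic bridge: for a word problem $s\approx^? t$ with goal $G:=s\not\approx t$, the set $R\cup\{s\approx t\}$ is satisfiable if and only if $R\cup\{G\}$ is unsatisfiable, and since $R$ is a set of Horn clauses each carrying a positive literal, $R$ itself is consistent (as noted in the text), so $R\models s\approx t$ exactly when $R\cup\{G\}$ is unsatisfiable. Thus deciding the word problem reduces to deciding unsatisfiability of $R\cup\{G\}$.

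Next I would invoke Lemma~\ref{lem:wordproblem}: because $R$ is finitely saturated under $\mathfrak{S}$, a fair theorem-proving derivation starting from $R\cup\{G\}$ need only apply Rewrite and Equality Resolution, with $G$ and its Rewrite-descendants as left premises and the (finitely many, unchanging) positive literals of $R$ as right premises. The key finiteness facts already established there are that $G\succ G'$ for every immediate Rewrite-descendant $G'$ of $G$, that $\succ$ is terminating, and that there are only finitely many positive literals available, so only finitely many Rewrite steps are possible from $G$; hence the derivation terminates in finitely many steps, producing a finite saturated set $S_\infty$. By the refutational completeness part of Lemma~\ref{lem:wordproblem} (itself a consequence of Theorem~\ref{thm:maintheorem}), the empty clause $\square$ appears in this derivation if and only if $R\cup\{G\}$ is unsatisfiable.

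Putting these together gives the decision procedure: run the (terminating) Rewrite/Equality-Resolution saturation of $R\cup\{G\}$; output ``$R\models s\approx t$'' if $\square$ is derived and ``$R\not\models s\approx t$'' otherwise. Termination guarantees the procedure always answers, and the completeness/soundness of $\mathfrak{S}$ guarantees the answer is correct. I would also remark that the per-step cost is polynomial — each Rewrite step amounts to a linear-time string-matching query (KMP) against the finitely many right-premise equations, and the ordering comparison $\succ$ on two strings is $O(n)$ — though the theorem as stated only claims decidability.

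The main obstacle is not in the high-level argument but in making sure the termination claim is airtight: one must confirm that restricting attention to $G$-descendants as left premises really is without loss of generality, i.e. that no \emph{new} left premises with selected literals can be generated that would reopen inferences against $R$. This is exactly the content of the selection-strategy discussion in Lemma~\ref{lem:wordproblem} — $G$ is negative hence selected, every Rewrite-descendant of $G$ is again negative hence again selected, and inferences internal to $R$ stay redundant in $R\cup\{G\}$ — so I would lean on that lemma rather than re-proving it, and the present proof is then just the short composition of the reduction with Lemma~\ref{lem:wordproblem}.
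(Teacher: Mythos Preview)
Your proposal is correct and follows essentially the same route as the paper: reduce the word problem to unsatisfiability of $R\cup\{G\}$, invoke Lemma~\ref{lem:wordproblem} for both termination and refutational completeness of Rewrite together with Equality Resolution on this set, and read off the answer from whether $\square$ appears. The paper's proof is slightly more explicit in splitting the two cases (empty clause derived versus not) and in citing Theorem~\ref{thm:maintheorem} directly for the satisfiable branch, but the substance is the same.
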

\begin{proof}
Let $\phi:=s\approx^?t$ be a word problem w.r.t.\;$R$ and $G$ be the goal of $\phi$. We know that by Lemma~\ref{lem:wordproblem}, Rewrite together with Equality Resolution is terminating and refutationally complete for $R\cup \{G\}$. Let $R_0:=R\cup\{G\}, R_1, \ldots, R_n$ be a fair theorem proving derivation w.r.t.\;Rewrite together with Equality Resolution such that $R_n$ is the limit of this derivation. If $R_n$ contains the empty clause, then $R_n$ is inconsistent, and thus $R_0$ is inconsistent, i.e.,\;$\{s\not\approx t\}\cup R$ is inconsistent by the soundness of the derivation. Since $R$ is consistent and $\{s\not\approx t\}\cup R$ is saturated under $\mathfrak{S}$, we may infer that $R\models s\approx t$. 

Otherwise, if $R_n$ does not contain the empty clause, then $R_n$ is consistent, and thus $R_0$ is consistent by Theorem~\ref{thm:maintheorem}, i.e.,\;$\{s\not\approx t\} \cup R$ is consistent. Since $R$ is consistent and $\{s\not\approx t\} \cup R$ is saturated under $\mathfrak{S}$, we may infer that $R\not\models s\approx t$.
\end{proof}

The following corollary is a consequence of Theorem~\ref{thm:main2} and the following observation. Let $R=R_0, R_1, \ldots, R_n$ be a finite fair theorem proving derivation w.r.t.\;$\mathfrak{S}$ for an initial conditional equational theory $R$ with the limit $\bar{R}:=R_n$. Then $R\cup \{G\}$ is inconsistent if and only if $\bar{R}\cup \{G\}$ is inconsistent by the soundness of the derivation and Theorem~\ref{thm:maintheorem}.

\begin{cor} Let $R=R_0, R_1, \ldots$ be a fair theorem proving derivation w.r.t.\;$\mathfrak{S}$ for a conditional equational theory $R$. If $R$ can be finitely saturated under $\mathfrak{S}$, then the word problems w.r.t.\;$R$ are decidable.
\end{cor}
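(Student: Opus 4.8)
The plan is to reduce every word problem w.r.t.\ $R$ to a word problem w.r.t.\ a finite saturated set and then invoke Theorem~\ref{thm:main2}. Since $R$ can be finitely saturated under $\mathfrak{S}$, I would first fix a finite fair theorem proving derivation $R=R_0,R_1,\ldots,R_n$ with limit $\bar R:=R_n$; by Lemma~\ref{lem:fairderivation}, $\bar R$ is saturated under $\mathfrak{S}$, it is finite, and it is still a conditional equational theory (each conditional equation is a Horn clause with exactly one positive literal, and the Rewrite, Equality Resolution, and Superposition inferences of $\mathfrak{S}$ applied to Horn clauses yield Horn clauses, each still carrying a positive literal). Hence Theorem~\ref{thm:main2} applies to $\bar R$: the word problems w.r.t.\ $\bar R$ are decidable by Rewrite together with Equality Resolution.

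Next, fix an arbitrary word problem $\phi:=s\approx^?t$ w.r.t.\ $R$ with goal $G:=s\not\approx t$. Since $G$ is the negation of $s\approx t$, we have $R\models s\approx t$ iff $R\cup\{G\}$ is inconsistent, and likewise $\bar R\models s\approx t$ iff $\bar R\cup\{G\}$ is inconsistent. The heart of the argument is then the equivalence: $R\cup\{G\}$ is inconsistent if and only if $\bar R\cup\{G\}$ is inconsistent. For the ``only if'' direction I would reuse the idea behind Lemma~\ref{lem:wordproblem}: by Lemma~\ref{lem:inclusion}(i) every Deduction/Deletion step performed along $R_0,\ldots,R_n$ remains sound and its redundancy side conditions remain valid after the clause $G$ is added to each $R_i$, so $R_0\cup\{G\},\ldots,R_n\cup\{G\}$ is a legitimate prefix that extends to a fair derivation starting from $R\cup\{G\}$; if $R\cup\{G\}$ is inconsistent, Theorem~\ref{thm:maintheorem} forces the empty clause to appear, and since all inferences among clauses of the saturated set $\bar R$ are already redundant, the refutation is in effect a refutation of $\bar R\cup\{G\}$. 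For the ``if'' direction I would invoke soundness of the derivation, which gives $R\models\bar R$, so every model of $R$ is a model of $\bar R$; contraposing, inconsistency of $\bar R\cup\{G\}$ yields inconsistency of $R\cup\{G\}$.

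Putting these together: to decide $\phi$ w.r.t.\ $R$, run the terminating Rewrite/Equality Resolution procedure on $\bar R\cup\{G\}$ as in Theorem~\ref{thm:main2}, and answer ``$R\models s\approx t$'' precisely when the empty clause is derived; correctness is exactly the equivalence of the previous paragraph. The step I expect to be the main obstacle is the bookkeeping in that equivalence — verifying that inserting $G$ into every $R_i$ preserves the status of the derivation, i.e.\ that none of the redundancy justifications used along $R_0,\ldots,R_n$ is destroyed by the extra clause and that the result still extends to a fair derivation of $R\cup\{G\}$; this is where Lemma~\ref{lem:inclusion}(i) does the real work. The remaining parts are a routine combination of Theorem~\ref{thm:main2}, Theorem~\ref{thm:maintheorem}, and soundness of theorem proving derivations, together with the observation recorded just before the statement that $R\cup\{G\}$ and $\bar R\cup\{G\}$ are equiconsistent.
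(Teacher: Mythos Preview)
Your proposal is correct and follows essentially the paper's approach: pass to the finite saturated limit $\bar R$, establish the equiconsistency of $R\cup\{G\}$ and $\bar R\cup\{G\}$ (the paper cites only soundness of the derivation and Theorem~\ref{thm:maintheorem} for this, while you flesh out the ``only if'' direction via Lemma~\ref{lem:inclusion}(i) and the augmented derivation $R_0\cup\{G\},\ldots,R_n\cup\{G\}$), and then invoke Theorem~\ref{thm:main2}. Your extra verification that $\bar R$ is still a conditional equational theory is a point the paper leaves implicit in its remarks at the start of Section~\ref{sec:HornCompletion}.
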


\begin{exam}\normalfont
\label{ex:ex3}
Let $a \succ b \succ c$ and $R$ be a conditional equational theory consisting of the following rules 1: $aa \approx \lambda$, 2: $bb \approx \lambda$, 3: $ab \approx \lambda$, 4: $ab\not\approx ba \vee ac \approx ca$, and 5: $ab\not\approx ba \vee ac \not\approx ca \vee bc \approx cb$. We first saturate $R$ under $\mathfrak{S}$:\\\\
6: $\lambda \not\approx ba \vee ac\approx ca$ ($ab\not\approx  ba$ is selected for 4. Rewrite of 4 with 3)\\
7:  $\lambda\not\approx ba \vee ac \not\approx ca \vee bc \approx cb$ ($ab\not\approx ba$ is selected for 5. Rewrite of 5 with 3)\\
8: $a\approx b$ (Superposition of 1 with 3)\\
9: $\lambda \not\approx bb \vee ac\approx ca$ ($\lambda \not\approx ba$ is selected for 6. Rewrite of 6 with 8)\\
10: $\lambda \not\approx \lambda \vee ac\approx ca$ ($\lambda \not\approx bb$ is selected for 9. Rewrite of 9 with 2)\\
11: $ac\approx ca$ ($\lambda \not\approx \lambda$ is selected for 10. Equality Resolution on 10)\\
12: $\lambda\not\approx bb \vee ac \not\approx ca \vee bc \approx cb$ ($\lambda\not\approx ba$ is selected for 7. Rewrite of 7 with 8)\\
13: $\lambda\not\approx \lambda \vee ac \not\approx ca \vee bc \approx cb$ ($\lambda\not\approx bb$ is selected for 12. Rewrite of 12 with 2)\\
14: $ac \not\approx ca \vee bc \approx cb$ ( $\lambda\not\approx \lambda$ is selected for 13. Equality Resolution on 13)\\
15: $ca \not\approx ca \vee bc \approx cb$ ( $ac \not\approx ca$ is selected for 14. Rewrite of 14 with 11)\\
16: $bc \approx cb$ ($ca \not\approx ca$ is selected for 15. Equality Resolution on 15)\\
$\cdots$\\
After some simplification steps, we have a saturated set $\bar{R}$ for $R$ under $\mathfrak{S}$ using our selection strategy (i.e., the selection of negative literals). We may infer that the positive literals in $\bar{R}$ are as follows. $1^\prime:bb \approx \lambda$, $2^\prime:a\approx b$, and $3^\prime: bc \approx cb$. Note that only the positive literals in $\bar{R}$ are now needed to solve a word problem w.r.t.\;$R$ because of our selection strategy.\\
\indent  Now, consider the word problem $\phi:= acbcba\approx^?bccaba$ w.r.t.\;$R$, where the goal of $\phi$ is $G:=acbcba\not\approx bccaba$. We only need the Rewrite and Equality Resolution steps on $G$ and its derived literals from $G$ using $1^\prime$, $2^\prime$, and $3^\prime$. Note that all the following literals are selected except the empty clause.\\\\
$4^\prime$: $bcbcbb\not\approx bccbbb$ (Rewrite steps of $G$ and its derived literals from $G$ using $2^\prime$).\\
$5^\prime$: $bcbc\not\approx bccb$ (Rewrite steps of $4^\prime$ and its derived literals from $4^\prime$  using $1^\prime$).\\
$6^\prime$: $ccbb\not\approx ccbb$ (Rewrite steps of $5^\prime$ and its derived literals from $5^\prime$ using $3^\prime$).\\
$7^\prime$: $\square$ (Equality Resolution on $6^\prime$)\\\\
\indent Since $\bar{R}\cup G$ is inconsistent, we see that $R\cup G$ is inconsistent by the soundness of the derivation, where $R$ and $\bar{R}$ are consistent. Therefore, we may infer that $R \models acbcba\approx bccaba$.
\end{exam}

\section{Related Work}
Equational reasoning on strings has been studied extensively in the context of string rewriting systems and Thue systems~\cite{Book1993} and their related algebraic structures. The monotonicity assumption used in this paper is found in string rewriting systems and Thue systems in the form of a congruence relation (see~\cite{Book1993,Kapur1985}). See~\cite{Book1981,Madlener1991,Cremanns2002,Otto1997} also for the completion of algebraic structures and decidability results using string rewriting systems, in particular \emph{cross-sections} for finitely presented monoids discussed by Otto et al~\cite{Otto1997}. However, those systems are not concerned with equational theorem proving for general clauses over strings. If the monotonicity assumption is discarded, then equational theorem proving for clauses over strings can be handled by traditional superposition calculi or SMT with the theory of equality with uninterpreted functions (EUF) and their variants~\cite{Barrett2009} using a simple translation into first-order ground terms. Also, efficient SMT solvers for various string constraints were discussed in the literature (e.g.,~\cite{Liang2016}).

Meanwhile, equational theorem proving modulo associativity was studied in~\cite{Rubio1996}. (See also~\cite{Kutsia2002} for equational theorem proving with \emph{sequence variables} and fixed or variadic arity symbols). This approach is not tailored towards (ground) strings, so we need an additional encoding for each string. Also, it is probably less efficient since it is not tailored towards ground strings, and it does not provide a similar decision procedure discussed in Section~\ref{sec:HornCompletion}.

The proposed calculus is the first sound and refutationally complete equational theorem proving calculus for general clauses over strings under the monotonicity assumption. One may attempt to use the existing superposition calculi for clauses over strings with the proposed translation scheme, which translates clauses over strings into clauses over first-order terms discussed in Section~\ref{subsec:lifting}. However, this does not work because of the Equality Factoring rule~\cite{Bachmair1994,Nieuwenhuis2001} or the Merging Paramodulation rule~\cite{Bachmair1994}, which is essential for first-order superposition theorem proving calculi in general. For example, consider a clause $a\approx b \vee a\approx c$ with $a\succ b\succ c$, which is translated into a first-order clause $a(x)\approx b(x) \vee a(y)\approx c(y)$. The Equality Factoring rule yields $b(z)\not\approx c(z) \vee a(z)\approx c(z)$ from  $a(x)\approx b(x) \vee a(y)\approx c(y)$, which cannot be translated back into a clause over strings (see Lemma~\ref{lem:representation}). Similarly, a first-order clause produced by Merging Paramodulation may not be translated back into a clause over strings. If one is only concerned with refutational completeness, then the existing superposition calculi\footnote{The reader is also encouraged to see \emph{AVATAR modulo theories}~\cite{Reger2016}, which is based on the concept of splitting.} can be adapted by using the proposed translation scheme. In this case, a saturated set may not be translated back into clauses over strings in some cases, which is an obvious drawback for its applications (see \emph{programs} in~\cite{Bachmair1994}).

\section{Conclusion}
This paper has presented a new refutationally complete superposition calculus with strings and provided a framework for equational theorem proving for clauses over strings. The results presented in this paper generalize the results about completion of string rewriting systems and equational theorem proving using equations over strings. The proposed superposition calculus is based on the simple string matching methods and the efficient length-lexicographic ordering that allows one to compare two finite strings in linear time for a fixed signature with its precedence.\\
\indent The proposed approach translates for a clause over strings into the first-order representation of the clause by taking the monotonicity property of equations over strings into account. Then the existing notion of redundancy and model construction techniques for the equational theorem proving framework for clauses over strings has been adapted. This paper has also provided a decision procedure for word problems over strings w.r.t.\;a set of conditional equations $R$ over strings if $R$ can be finitely saturated under the Superposition, Rewrite and Equality Resolution rule. (The complexity analysis of the proposed approach is not discussed in this paper. It is left as a future work for this decision procedure.)

\indent Since strings are fundamental objects in mathematics, logic, and computer science including formal language theory, developing applications based on the proposed superposition calculus with strings may be a promising future research direction. Also, the results in this paper may have potential applications in verification systems and solving satisfiability problems~\cite{Armando2003}.\\
\indent In addition, it would be an interesting future research direction to extend our superposition calculus with strings to superposition calculi with strings using built-in equational theories, such as commutativity, \emph{idempotency}~\cite{Book1993}, \emph{nilpotency}~\cite{Guo1996},  and their various combinations. For example, research on superposition theorem proving for \emph{commutative monoids}~\cite{Rosales1999} is one such direction.

\bibliographystyle{eptcs}
\bibliography{references}
\end{document}